\documentclass[conference,letterpaper]{IEEEtran}

\usepackage[utf8]{inputenc} 
\usepackage[T1]{fontenc}
\usepackage{url}
\usepackage{ifthen}
\usepackage{cite}
\usepackage[cmex10]{amsmath} % Use the [cmex10] option to ensure complicance
\usepackage{amsthm}
\usepackage{booktabs}
\usepackage{amsfonts}
\usepackage{comment}

\usepackage{amssymb}
\usepackage[top = 0.8in, bottom=0.77in, left=0.68in, right=0.68in]{geometry}

\newtheorem{theorem}{Theorem}

\newtheorem{problem}{Problem}

\newtheorem{lemma}{Lemma}
\newtheorem{construction}{Construction}
\newtheorem{corollary}{Corollary}

\newtheorem{example}{Example}
\newtheorem{claim}{Claim}

\usepackage{xcolor}

\newcommand{\cC}{\mathcal{C}}

\newcommand{\bfc}{{\boldsymbol c}}

\newcommand{\bfu}{{\boldsymbol u}}
\newcommand{\bfv}{{\boldsymbol v}}

\title{Correcting Tail Deletions in Rank Modulated Composite Encoding for Data Storage in DNA} 

\begin{document}

\title{ Correcting Tail Deletions in Rank Modulated Composite Encoding for Data Storage in DNA}\author{\textbf{Tomer Cohen}\! \IEEEauthorrefmark{1}, \textbf{Eitan Yaakobi}\! \IEEEauthorrefmark{1}, and \textbf{Zohar Yakhini}\! \IEEEauthorrefmark{1}\IEEEauthorrefmark{2}\\
\IEEEauthorblockA{\IEEEauthorrefmark{1}Department of Computer Science, Technion - Israel Institute of Technology, Haifa 3200003, Israel}
\IEEEauthorblockA{\IEEEauthorrefmark{2}School of Computer Science, Reichman University, Herzliya, Israel}
{Emails: tomer.cohen@campus.technion.ac.il,\ yaakobi@cs.technion.ac.il,\ zohar.yakhini@gmail.com}
}
\date{\today}

\maketitle

%%%%%%
%% Abstract: 
%% If your paper is eligible for the student paper award, please add
%% the comment "THIS PAPER IS ELIGIBLE FOR THE STUDENT PAPER
%% AWARD." as a first line in the abstract. 
%% For the final version of the accepted paper, please do not forget
%% to remove this comment!
%%

\begin{abstract}
We study the combination of two recent coding approaches, in the context of DNA based data storage. Composite DNA alphabets leverage properties of the DNA synthesis and sequencing process. A composite symbol does not represent a single nucleotide, but rather a designed mixture of DNA nucleotides. Using the high multiplicity that is intrinsic to synthesis and sequencing a composite symbol consists of frequencies in the mixture. Rank modulation codes use permutations to represent information. Combining the two, we construct encoding that uses permutations of nucleotide frequencies rather than the exact frequency values.  Codes for this approach were addressed in previous work, under Kendall's tau distances. In this work we study deletion and insertion codes. We present bounds and constructions of efficient codes defined over partial permutations. 
\end{abstract}
\section{Introduction}\label{sec:intro}

The main obstacle to reducing the cost of DNA based data storage is the cost of DNA synthesis. One natural way to mitigate this cost is to increase the storage density, measured in bits per symbol or per synthesis cycle. With standard encoding over the alphabet $A,C,G,T$, the maximum achievable rate is $\log_2 4 = 2$ bits per symbol. However, practical constraints related to error correction often reduce this rate; for instance, prohibiting consecutive identical symbols limits the rate to $\log_2 3 \approx 1.58$ bits per symbol~\cite{Getal13, osti_1619517}. Expanding the set of available encoding symbols can therefore increase the achievable capacity and lower synthesis costs.

\emph{Composite DNA symbols}, proposed in~\cite{anavy_DataStorageDNA_2019,augmented_encoding}, leverage the redundancy inherent in DNA synthesis and sequencing. In contrast to conventional symbols that correspond to a single nucleotide, a composite symbol represents a mixture of the four nucleotides. Such a symbol is defined by a probability vector ${p_A, p_C, p_G, p_T}$, where $p_\mu$ denotes the proportion of nucleotide $\mu$ and $\sum p_\mu = 1$. As an example, the symbol ${R = (0.1, 0.2, 0.3, 0.4)}$ corresponds to a mixture of all four bases. When a sequence such as $CRG$ is synthesized, the resulting pool contains all four possible sequences at the second position, e.g., $CAG,CCG,CGG$ and $CTG$, appearing with the same frequency as the mixed symbol $R$. By sequencing a subset of molecules from this pool, the underlying nucleotide proportions can be inferred.

A generalization of the composite symbol framework, known as \emph{combinatorial composite DNA}, was introduced by Preuss et al.~\cite{PGYA24}. In this model, composite symbols are defined at the shortmer level, so that each symbol corresponds to a mixture of short DNA sequences, called motifs or building-blocks, rather than to individual nucleotides. Error-correcting codes tailored to the composite DNA setting were studied by Zhang et al. in~\cite{ZC22}. In addition, several related models and associated coding techniques have been proposed in recent works, including~\cite{OM1,OM2}. Other related studies include the selection of composite symbol sets based on decoding probability~\cite{TC24} and capacity results for channels with combinatorial composite symbols~\cite{ROM}. 
It is worth noting that combinatorial composite DNA may be interpreted as a special case of composite symbols with a union-type distribution, or equivalently, as a \emph{subset} of the composite symbol framework.

The idea of \emph{rank modulation} has been extensively investigated in the context of flash memory, for example in~\cite{RAMOSCH,RobRank}. Coding schemes based on permutations were further developed and analyzed in~\cite{CORRSCH,BAMA,RMEitanRyan}.
In this work, we further develop the combination of the theories of composite DNA and rank modulation. While a composite DNA symbol is characterized by its probability distribution, a rank-modulated composite symbol is determined solely by the relative ordering of its motifs. For instance, the symbols $(0.1,0.2,0.3,0.4)$ and $(0.05,0.25,0.3,0.4)$ correspond to distinct composite symbols, yet they define the same rank-modulated symbol since the ordering of their component motifs is identical. Rather than considering the full space of probability distributions, we restrict attention to the \emph{ranks} induced by the symbol distribution.

The work reported in~\cite{RMC25} introduced a channel model based on rank-modulated composite DNA symbols and studied coding over this alphabet. The investigation addressed coding for rank-modulated composite symbols under errors that affect only the relative ordering of motifs/building-blocks. In that setting, the symbols were modeled as \emph{fixed-length} partial permutations, errors were measured using Kendall's tau, and bounds and constructions were derived for detecting and correcting such errors.

 %That work focused on two central problems. It first addresses the capacity of a channel that receives a rank-modulated composite symbol and outputs a single motif according to the symbol’s probability distribution. This channel is used multiple times in every transmission to constitute an alphabet symbol. This problem is closely related to the binomial channel studied in~\cite{WESEL}, as well as to capacity analyses for composite DNA symbols considered in~\cite{KYW23}. 

 %The second problem addressed coding for rank-modulated composite symbols under errors that affect only the relative ordering of motifs. In that setting, the symbols were modeled as \emph{fixed-length} partial permutations, errors were measured using the Kendall's tau distance, and bounds and constructions were derived for detecting and correcting such rank errors.

This work departs from this framework in two fundamental ways. First, we remove the assumption that rank-modulated symbols correspond to permutations of fixed length and instead allow permutations of \emph{variable length}. Second, rather than modeling errors exclusively as changes in the relative order of motifs, we argue that a more faithful model is obtained by considering insertion and deletion errors that occur at the lower end of the ranking, corresponding to weak motifs in the composite symbol. This shift leads to a different coding perspective: we first develop codes over variable-length permutations that can detect and correct insertions and deletions, and only then lift these constructions to the rank-modulated composite setting. In this latter stage, codewords are sequences whose components are variable-length permutations. We provide both constructions and bounds on the size of codes for such alphabets.

The rest of the paper is organized as follows. In Section~\ref{sec:permdef} we define rank modulated codes and some related error models. In this section we also state the connection of different error models over permutations. In Section~\ref{sec:permcon}, we describe code constructions over permutations and prove related bounds. In Section~\ref{sec:defvector}, we extend our investigation to vectors (or equivalently sequences or strings) of partial permutations. These vectors represent words constructed over alphabets of partial permutations.

\section{Definitions, Preliminaries, and Problems Statements}\label{sec:permdef}

\newcommand\partperm[2]{\mathcal{S}_{#1}^{#2}}
\newcommand\partpermgen{\partperm{m}{q}}

\newcommand\partpermall[1]{\partperm{\textbf{all}}{#1}}
\newcommand\partpermallgen{\partperm{\textbf{all}}{q}}

\newcommand\partpermallgenseq{(\partperm{\textbf{all}}{q})^n}

\newcommand{\rankchanneldel}[3]{\textbf{RMCC-DEL}(#1,#2,#3)}
\newcommand{\rankchanneldelconsts}{\rankchanneldel{q}{d}{p}}

\newcommand\insoperatorsphere[1]{\mathcal{S}_\textbf{ins}^{#1}}

\newcommand\deloperator[1]{\mathcal{B}_\textbf{del}^{#1}}
\newcommand\insoperator[1]{\mathcal{B}_\textbf{ins}^{#1}}
\newcommand\suboperator[1]{\mathcal{B}_\textbf{sub}^{#1}}

\newcommand\indeloperator[1]{\mathcal{B}_\textbf{indel}^{#1}}

\newcommand{\codesizedetdel}{\mathsf{DEL}_{\textbf{det}}}
\newcommand{\codesizecordel}{\mathsf{DEL}_{\textbf{cor}}}

\newcommand{\codesizedetdelseq}{\mathsf{DEL}^n_{\textbf{det}}}
\newcommand{\codesizecordelseq}{\mathsf{DEL}^n_{\textbf{cor}}}

\newcommand{\codesizedetins}{\mathsf{INS}_{\textbf{det}}}
\newcommand{\codesizecorins}{\mathsf{INS}_{\textbf{cor}}}

\newcommand{\codesizedetsub}{\mathsf{SUB}_{\textbf{det}}}
\newcommand{\codesizecorsub}{\mathsf{SUB}_{\textbf{cor}}}

\newcommand{\codesizedetindel}{\mathsf{INDEL}_{\textbf{det}}}
\newcommand{\codesizecorindel}{\mathsf{INDEL}_{\textbf{cor}}}

When using DNA for data storage we aim to replace the traditional binary alphabet $\{0,1\}$ to the four nucleotides $\{A,C,G,T\}$. The information is stored as a sequence or a \emph{strand} of nucleotides. 
%For example, a picture can be presented as the sequence $AATGTA$. 
In this scheme, we are limited to $\log_2(4)=2$ bits of information per symbol. In this paper, we use a general alphabet $[q] = \{0, 1, \dots, q-1\}$ of size $q$, which in DNA data storage, $q=4$. %During the synthesis process (writing) many copies of the same strand are generated. %While under the assumption of error free model only one sequencing (reading) is needed to recover the data.

\subsection{Composite Symbols and Combinatorial Composite Symbols}
A recent approach introduced in~\cite{anavy_DataStorageDNA_2019, augmented_encoding} under the name \emph{composite symbols} utilizes the fact that many copies are generated during the synthesis process to increase the number of bits per symbol. Since the synthesis process is much more expensive than sequencing, this idea maintains the same cost of the synthesis process as traditional DNA data storage, while it increases the cost of the sequencing process, mainly, the duration of the process. In \cite{anavy_DataStorageDNA_2019}, the authors presented a new symbol $M$ that can be either $A$ or $C$ during the synthesis process with probabilities $\frac{1}{2},\frac{1}{2}$. For example the sequence $ATTMGA$ can generate during synthesis the following set of strands $\{ATTAGA,ATTAGA,ATTAGA,ATTGCA\}$. One can note that in this method one sequence is not enough to recover the original strand. By reading enough strands one can infer from the fact that in the fourth index there can be either $A$ or $C$ that it is the new symbol $M$. Theoretically, this method allows us to use any probability distribution over $A,C,G,T$ as a new symbol.  In general over the general alphabet $[q]$, any probability distribution in the $(q-1)$-dimensional probability simplex $\Delta_{q} = \left\{ (p_1, p_2, \dots, p_q) \in [0,1]^q : \sum_{i=1}^{q} p_i = 1 \right\}$. Since the set is infinite there is no bound on the number of bits per symbol, however, using the symbols $(\frac{1}{2},\frac{1}{2},0,0)$ and $(\frac{1}{2}-0.001,\frac{1}{2}+0.001,0,0)$ over $A,C,G,T$ for example will require so many reads to distinct between the two that this method will not be useful. 

Previous works \cite{TC24, KYW23} have focused on selecting optimal subsets of composite symbols to minimize the decoding failure error probability or maximize channel capacity. Furthermore, constructions of error-correcting codes specifically designed for composite symbols have been explored in \cite{OM1, OM2}.

%\subsection{Combinatorial Composite Symbols}
To overcome the sensitivity of the reading process, a new approach introduced in \cite{PGYA24} utilizes \emph{combinatorial composite symbols}. Instead of trying to recover the exact probability distribution, the goal is to determine the set of symbols that represent the composite symbol. In this method the probability distribution is always uniform over a constant number of symbols. For a fixed size $m$, the set of combinatorial composite symbols consists of all subsets of $[q]$ with cardinality $m$, i.e.,
$
\Sigma^{q,m}_{\text{comb}} \triangleq \{A \subseteq [q] : |A| = m\}.
$
The size of this expanded alphabet is $\binom{q}{m}$, increasing the number of bits per symbol to $\log_2\left(\binom{q}{m}\right)$. Given $m$ we can stop the reading process assuredly after receiving exactly $m$ symbols in every index. In~\cite{TC24}, the authors calculated the expected number of reads that are needed in this case.

\subsection{Partial Permutations}
A different approach to avoid this sensitivity of the reading process was recently introduced in \cite{RMC25}. To increase even more the number of symbols than in the combinatorial approach, instead of considering all the probability distributions they were interested only in the \emph{ranking} between the symbols of every composite symbol. For example the symbols $(\frac{1}{4},\frac{3}{4},0,0)$ and $(\frac{1}{3},\frac{2}{3},0,0)$ over $A,C,G,T$ represent the same symbol $AC$, i.e., $A$ is less than $C$. In this paper we will also use the notation $A<C$ or $AC$ where the symbols are ordered increasingly from the left to the right.

While in~\cite{RMC25} errors in the ranking were addressed, in this paper we address insertions and deletions in the left tail of the partial permutation.
First we define our base symbols, the \emph{partial permutations}. 
For $m<q$, we denote by $\partpermgen$ the set of all partial permutations over $[q]$ of length $m$. For example, for $q=3$ it holds that
$
\partperm{1}{3}=\{1,2,3\},\partperm{2}{3}=\{12,21,23,32,13,31\},\partperm{2}{3}=\{123,132,213,231,312,321\}.
$
% for $q=3$ and $m=2$ it holds that
% $
% \partperm{2}{3}=\{12,21,23,32,13,31\},
% $
% and for $q=3$ and $m=3$ it holds that
% $
% \partperm{2}{3}=\{123,132,213,231,312,321\}.
% $

In our paper we use partial permutations of \emph{any length},
and denote by $\partpermallgen$ the set of all partial permutations 
over $[q]$ of
any length greater than $1$, i.e.,
% Since we wish in our paper to use partial permutations of \emph{any length}
% we denote by $\partpermallgen$ the set of all partial permutations over $[q]$ of any length greater than $1$, i.e.,
$
\partpermallgen=\cup_{i=1}^{q}{\partperm{i}{q}}.
$
For $q=3$ it holds that
$
\partpermall{3}=\partperm{3}{3}\cup \partperm{2}{3} \cup \partperm{1}{3}.
$
One natural question regarding this set of all partial permutations is finding its size in order to bound the capacity. %This result is given in the following claim.
\begin{claim}
    It holds that
    $
    |\partpermallgen|=q!\sum_{i=0}^{q-1}{\frac{1}{i!}}\approx q!\cdot e.
    $
    % where 
    % \[
    % \lim _{q\rightarrow\infty}{\sum_{i=1}^{q}{\frac{1}{(q-i)!}}}=e
    % \]
\end{claim}

\begin{comment}
\begin{corollary}
    When using partial permutations one can store up to $\log_2(q!\cdot e)$ bits of information per nucleotide.
\end{corollary}
\end{comment}

% \begin{IEEEproof}
%     By definition it holds that
%     \[
%     |\partpermallgen|=\sum_{i=1}^{q}{|\partperm{i}{q}|}.
%     \]
%     There are exactly $\frac{q!}{(q-i)!}$ partial permutations in $\partperm{i}{q}$, therefore, 
%     \begin{align*}
%     |\partpermallgen|=\sum_{i=1}^{q}{\frac{q!}{(q-i)!}}=q!\sum_{i=1}^{q}{\frac{1}{(q-i)!}} \approx q!\cdot e.
%     \end{align*}
%     where 
%     \[
% {\sum_{i=0}^{\infty}{\frac{1}{i!}}}=e
%     \]
% \end{IEEEproof}

% During synthesis(writing) we can only use real DNA symbols, i.e., $A,C,G$ and $T$. We can use partial permutations in the same way we use \emph{composite symbols}. We choose a probability distribution that represents our complex symbol. The left side of the partial permutation represents the weak part, i.e., the symbols with smaller probability. We also use the terms suffix and prefix for the right and left tail of the partial permutation, respectively.

\begin{example}\label{ex:compositeexample}
    The partial permutation $AC$ represents the case where $A<C$. By using $A,C$ with the probabilities $1/3,2/3$ we store the information of the ranking. In the sequencing process we count the number of occurrences of each symbol and infer the original partial permutation. If we get $A,C$ for $10,17$ times we assume that the partial permutation was $AC$, i.e. $A<C$. However, since it is a probabilistic process it is not guaranteed that the output in the sequencing process will represent the original partial permutation. It is possible to calculate that by sequencing $10$ times the probabilities to get $A<C$ and $C\leq A$ are $0.787$ and $0.213$ respectively.
    \end{example}

%After defining the partial permutations, we are interested in related codes. 
In this paper we call a set of partial permutations $\mathcal{C}\subseteq \partpermallgen$ a \emph{code over permutations} or simply \emph{a code}. When using partial permutations as DNA symbols, natural errors arise like deletions, insertions, and deletions combined with insertions. In this paper, all those errors will occur in the \emph{left tail} of the permutation, i.e., the left side of the permutation. Such errors are frequent when using DNA for storage. 

% \tomer{The next example and the table are not good. I've tried a few variations but they all fail to capture the crux of this model.}
\begin{example}
During the synthesis of the same partial permutation from Example~\ref{ex:compositeexample}, i.e., $A<C$ with probabilities $1/3,2/3$ it is common to have errors. For example, it is possible that with probability $1/100$ we will write $T$ instead of $A$ or $C$. Since we intended to write a partial permutation of size $2$ we only care about the $2$ most frequent symbols in the output and their ranking. In Table~\ref{tab:dna_probs_desc} we have the $5$ strongest outputs for the case of $10$ samples, probabilities $1/3,2/3$ for the partial permutation and error probability of $1/100$. The distribution of the symbols $A,C,T$ is multinomial with probabilities $0.33,0.66,0.01$.
% With error probability $\epsilon$ the distribution of the symbols $A,C,T$ is multinomial with probabilities $\epsilon,\frac{1-\epsilon}{3},\frac{2(1-\epsilon)}{3}$  
\end{example}
\begin{table}[ht]
    \centering
    \caption{Probabilities of Symbol Order Outcomes}
    \label{tab:dna_probs_desc}
    \begin{tabular}{lcl}
        \toprule
        \textbf{Order Outcome} & \textbf{Probability} & \textbf{Description} \\
        \midrule
        $A < C$     & 0.695949 & Correct \\
        $C < A$     & 0.069227 & Swap \\
        $T < A < C$ & 0.066184 & $1$ Tail Insertion  \\
        $C$         & 0.015683 & $1$ Tail Deletion \\
        $T < C < A$ & 0.013427 & $1$ Tail Insertion + Swap \\
        \bottomrule
    \end{tabular}
\end{table}

One can note that the probabilities of errors in the tail of the symbol are just as likely as the swap. This gap is addressed in our paper to detect and correct such errors in the tail. The exact error models are introduced in the following subsection.
\subsection{Error Models}\label{subsec:models}
First, we address the problem of deletions in the left tail.
For $\pi=(\pi_1,\pi_2,\dots,\pi_m)\in\partpermgen$, we say that $t\leq m-1$ \emph{tail deletions} in $\pi$ occurred if the first $t$ symbols were deleted. It is important to note that at most $m-1$ deletions can occur on a partial permutation from $\partpermgen$.

For $\pi=(\pi_1,\pi_2,\dots,\pi_i)\in\partpermallgen$, we denote by $\pi_{\downarrow_{j}}$ the partial permutation $\pi$ after $j$ tail deletions. For example,
$
(2341)_{\downarrow_2}=41.
$
Since we cannot delete the last symbol it holds that
$
(41)_{\downarrow_2}=1.
$
We also define $\pi_{\downarrow_{0}}=\pi.$

We denote by $\deloperator{t}(\pi)$ the set of all partial permutations after \emph{at most} $t$ tail deletions. For example, if $2$ tail deletions occurred in $3245\in\partperm{4}{6}$ we get the partial permutation $45$, and furthermore, $\deloperator{2}(3245)=\{3245,245,45\}$. One can observe that $\deloperator{t}(\pi)=\{\pi,\pi_{\downarrow_1},\pi_{\downarrow_2},\dots,\pi_{\downarrow_{t}}\}$.

% Next we define the tail deletions channel:
% Let $q$ be the size of the alphabet.
% We denote the channel by $\rankchanneldelconsts$.

% \textbf{Input:}
% A partial permutation $\pi=(\pi_1,\pi_2,\dots,\pi_m)\in \partpermall{q}$ 

% \textbf{Output:}
% The channel transmits a partial permutation $\pi'\in\partpermall{q}$.
% With probability $(1-p)$ it holds that $\pi'=\pi$.
% With probability $p$ at most $d$ tail deletions occurred to $\pi$.
% Denote by $\mathsf{cap}_\textbf{del}(q,d,p)$ the capacity of the channel $\rankchanneldelconsts$

A code is called a \emph{$t$-tail-deletion-detecting code} if for every $\pi\in \mathcal{C}$ it holds that
$
\mathcal{C}\cap \deloperator{t}(\pi)= \{\pi\}.
$
We denote by $\codesizedetdel(q,t)$ the size of the largest $t$-tail-deletion-detecting code. A code $\mathcal{C}\subseteq \partpermallgen$ is called a \emph{$t$-tail-deletion-correcting code} if for every $\pi_1,\pi_2\in \mathcal{C}$ such that $\pi_1\neq \pi_2$ it holds that
$
\deloperator{t}(\pi_1) \bigcap \deloperator{t}(\pi_2)= \phi.
$
We denote by $\codesizecordel(q,t)$ the size of the largest $t$-tail-deletion-correcting code.

Secondly, we address similarly the problem of insertions in the tail.
For $\pi=(\pi_1,\pi_2,\dots,\pi_m)\in\partpermgen$, we say that \emph{$t$ tail-insertions} occurred in $\pi$ if $t$ symbols were added to the left tail. Since we are interested in the ranking between the symbols, only new symbols can be inserted, and therefore, at most $q-m$ insertions can occur on a partial permutation from $\partpermgen$. We denote by $\insoperator{t}(\pi)$ the set of all partial permutations after \emph{at most} $t$ tail insertions.
For example, if $1$ tail insertion occurred in $3245\in\partperm{4}{6}$, we can get the partial permutations $13245$ and $63245$. Furthermore, $\insoperator{1}(3245)=\{3245,13245,63245\}.$
Since we only care about the ranking between the symbols, the largest partial permutation can be of size $q$. For example, with $q=4$ it holds that,
$
\insoperator{2}(341)=\{341,2341\}.
$

For $\pi=(\pi_1,\pi_2,\dots,\pi_i)\in\partpermallgen$, we denote by $\insoperatorsphere{t}(\pi)$
the \emph{set} of all partial permutations $\pi'$ such that after \emph{exactly} $t$-tail deletions from $\pi'$ we get $\pi$, i.e.,
$
\pi'_{\downarrow_t}=\pi.
$
For example, 
$
\insoperatorsphere{2}(41)=\{2341,3241\}.
$
It is important to note that the definition of $\insoperatorsphere{t}$ depends on the size of the alphabet. When we use this notation in this paper, the size of $q$ will be clear. One can observe that $\insoperator{t}(\pi)=\cup_{i\leq t}\insoperatorsphere{t}(\pi)$, or similarly $\insoperatorsphere{t}(\pi)=\insoperator{t}(\pi) \setminus \insoperator{t-1}(\pi)$. 

If $\pi\in\partperm{i}{q}$ and $t\leq q-i$ then it holds that
$
|{\insoperatorsphere{t}(\pi)}|=\binom{q-i}{t} t!.
$
For example, with $q=4$ it holds that 
$\insoperatorsphere{2}(1)=\{231,241,321,341,421,431\}$, and $|\insoperatorsphere{2}(1)|=\binom{4-1}{2} 2!=6$.

Similarly to the deletion model, we define a \emph{$t$-tail-insertion-detecting code} and a \emph{$t$-tail-insertion-correcting code}. We denote by $\codesizedetins(q,t)$ and $\codesizecorins(q,t)$ the sizes of the largest $t$-tail insertion-detecting code and correcting code, respectively.

% A code $\mathcal{C}\subseteq \partpermallgen$ is called a \emph{$t$-tail-insertion-detecting code} if for every $\pi\in \cC$ it holds that
% \[
% \cC\cap \insoperator{t}(\pi)= \{\pi\}.
% \]
% We denote by $\codesizedetins(q,t)$ the size of the largest $t$-tail-insertion-detecting code.
% A code $\mathcal{C}\subseteq \partpermallgen$ is called a \emph{$t$-tail-insertion-correcting code} if for every $\pi_1,\pi_2\in \cC$ such that $\pi_1\neq \pi_2$ it holds that
% $
% \insoperator{t}(\pi_1) \bigcap \insoperator{t}(\pi_2)= \phi.
% $
% We denote by $\codesizecorins(q,t)$ the size of the largest $t$-tail-insertion-correcting code.

Lastly, we address the problem of combining insertions and deletions in the tail.
Let $\pi=(\pi_1,\pi_2,\dots,\pi_m)\in\partpermgen$, we say that \emph{$t$ tail indels} occurred if $t$ symbols were deleted or inserted in the left tail. We denote by $\indeloperator{t}(\pi)$ the set of all partial permutations after \emph{at most} $t$ tail indels. It can be verified  that 
\[
\indeloperator{t}(\pi)=\cup _{i=0}^{t}\left\{\cup_{\pi '\in {\deloperator{i}({\pi})}}\left\{\insoperator{t-i}({\pi '})\right\}\right\}.
\]
For example, if $1$ tail indel occurred in $3245\in\partperm{4}{6}$ we can get the partial permutations $245,13245,63245$, and furthermore, $\indeloperator{1}(3245)=\{245,13245,63245\}.$
One should note that the ranking can include all the original symbols but in a different order; for example,
$
2345\in \indeloperator{4}(3245).
$
Furthermore, lower symbols can remain the same while other change. For example,
$
3645\in \indeloperator{4}(3245),
$
with the process of $3245\rightarrow_{d}245\rightarrow_{d}45\rightarrow_{i}645\rightarrow_{i}3645$.

Similarly to the deletion model we define a \emph{$t$-tail-indel-detecting code} and a \emph{$t$-tail-indel-correcting code}. We denote by $\codesizedetindel(q,t)$ and $\codesizecorindel(q,t)$ the sizes of the largest $t$-tail indel-detecting code and correcting code, respectively.

\subsection{Problem Statement}

% \begin{problem}
% For every $q,d,p$ find $\mathsf{cap}_\textbf{del}(q,d,p)$. Furthermore, find the c.a.i.d of the channel $\rankchanneldelconsts$
% \end{problem}

The main problem of this paper is to find codes over permutations that can detect and correct the different types of error models.
\begin{problem}
For every $q,t$ such that $t\leq q$:

\begin{itemize}
    \item Find $\codesizedetdel(q,t),\codesizecordel(q,t)$, and find such codes.
    \item Find $\codesizedetins(q,t),\codesizecorins(q,t)$, and find such codes.
    \item Find $\codesizedetindel(q,t),\codesizecorindel(q,t)$, and find such codes.
\end{itemize}

\end{problem}

% \begin{problem}
% For every $q,t$ find $\codesizedetins(q,t),\codesizecorins(q,t)$, and find such codes.
% \end{problem}

% \begin{problem}
% For every $q,t$ find $\codesizedetindel(q,t),\codesizecorindel(q,t)$, and find such codes.
% \end{problem}

Since every $t$-tail-indel-detecting code is also a $t$-tail-insertion-detecting code and a $t$-tail-deletion-detecting code it holds that 
$
\codesizedetindel(q,t)\leq \codesizedetdel(q,t),\codesizecorindel(q,t)\leq \codesizecordel(q,t)
$ and $
\codesizedetindel(q,t)\leq \codesizedetins(q,t),\codesizecorindel(q,t)\leq \codesizecorins(q,t)
$.

In our paper we mainly focus on tail deletion codes. In Section~\ref{sec:equiverrors}, we present the equivalence and relations between the different error models. The construction of such codes will also bound the size of codes that the can address other error models.

In Section~\ref{sec:defvector}, we extend our investigation to sequences over partial permutation symbols and formally state the related problem.

\section{Equivalence and Relations between Different Error Models}\label{sec:equiverrors}

The equivalence between deletion, insertion and indels codes over vectors is well known~\cite{Levenshtein1966}. In our model, the codewords are of different lengths and therefore, behave differently.
The first theorem presents the equivalence between tail deletion and insertion detecting codes and disproves an equivalence to tail indel detecting codes. Moreover, for $q\geq 3$ and for any $t\geq 1$ being able to detect $t$ deletions cannot imply being able to detect even a constant amount of indels

\begin{theorem}\label{th:allsamecor}

A code $\mathcal{C}$ is a $t$-tail-deletion-detecting code if and only if it is 
a $t$-tail-insertion-detecting code.
However, for any $q\geq 3$ and $t\geq 1$ there exists a code $\mathcal{C}$ that is a $t$-tail-deletion-detecting code that is not a $2$-tail-indel-detecting code.

% The following statements are equivalent 
% \begin{enumerate}
% \item A code over permutations $\mathcal{C}\subseteq \partpermallgen$ is a $t$-tail-deletion-detecting code.
% \item A code over permutations $\mathcal{C}\subseteq \partpermallgen$ is a $t$-tail-insertion-detecting code.
% % \item A code over permutations $\mathcal{C}\subseteq \partpermallgen$ is a $t$-tail-indel-detecting code.
% \end{enumerate}

% Let $q\geq2$, for any $t\geq 1$ exists a code over permutations $\mathcal{C}\subseteq \partpermallgen$ that is a $t$-tail-deletion-detecting code that is not a $2$-tail-indel-detecting code.

% Furthermore,
% There exists a code over permutations $\mathcal{C}\subseteq \partpermallgen$ that is a $t$-tail-deletion-detecting code that is not a $t$-tail-indel-detecting code.
\end{theorem}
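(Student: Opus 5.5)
The equivalence rests on a duality between the two balls: for partial permutations $\pi,\sigma\in\partpermallgen$ and $t\geq 0$, we have $\sigma\in\deloperator{t}(\pi)$ if and only if $\pi\in\insoperator{t}(\sigma)$. This follows from the definitions. $\sigma=\pi_{\downarrow_j}$ for some $j\leq t$ with $j$ at most $|\pi|-1$ exactly when $\pi\in\insoperatorsphere{j}(\sigma)$. The point is that $\pi'_{\downarrow_j}=\sigma$ means $\pi'$ is $\sigma$ with $j$ new, distinct symbols prepended. The cap on deletions, which prevents deleting the last symbol, matches $\pi$ being strictly longer than $\sigma$ by $j$. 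Any overflow case such as $(41)_{\downarrow_2}=1$ also gives $(41)_{\downarrow_1}=1$ with fewer deletions, so it causes no trouble.

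With the duality in hand, the detecting property can be restated symmetrically.
\begin{itemize}
\item $\mathcal{C}$ is $t$-tail-deletion-detecting if and only if there are no distinct $\pi,\sigma\in\mathcal{C}$ with $\sigma\in\deloperator{t}(\pi)$.
\item By the duality, this is the same as having no distinct $\pi,\sigma\in\mathcal{C}$ with $\pi\in\insoperator{t}(\sigma)$.
\end{itemize}
The second condition is exactly $\mathcal{C}\cap\insoperator{t}(\sigma)=\{\sigma\}$ for all $\sigma\in\mathcal{C}$, which is the definition of a $t$-tail-insertion-detecting code. The same argument run in reverse gives the converse.

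For the separation, given $q\geq 3$ and $t\geq 1$, I take two codewords that are related by one deletion followed by one insertion but never by deletions alone. Let $\mathcal{C}=\{01,21\}$; this needs only the symbols $0,1,2\in[q]$, which is why $q\geq 3$ is required. Then $21\in\indeloperator{2}(01)$ via $01\rightarrow_d 1\rightarrow_i 21$, so $\mathcal{C}$ is not $2$-tail-indel-detecting.

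It remains to check that $\mathcal{C}$ is $t$-tail-deletion-detecting for every $t$:
\begin{itemize}
\item $\deloperator{t}(01)=\{01,1\}$, which does not contain $21$;
\item $\deloperator{t}(21)=\{21,1\}$, which does not contain $01$.
\end{itemize}
Both words have the same length and differ, so neither is obtained from the other by pure tail deletions.

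The only delicate point is handling the boundary conventions (capped deletions and the alphabet-size limit on insertions) in the duality step. Both directions stay within $\partpermallgen$ and the lengths match, so no real obstruction is expected.
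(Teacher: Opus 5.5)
Your proposal is correct and takes essentially the same route as the paper. The equivalence is the same suffix duality ($\pi\in\insoperator{t}(\sigma)$ iff $\pi=\omega\sigma$ with $|\omega|\le t$ iff $\sigma\in\deloperator{t}(\pi)$), and your separating code $\{01,21\}$ is the paper's $\{12,32\}$ up to relabeling, with your explicit treatment of the capped-deletion convention (e.g., $(41)_{\downarrow_2}=1$) adding a little rigor without changing the argument.
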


\begin{IEEEproof}
    Let $\mathcal{C}$ be a $t$-tail-deletion-detecting-code. Assume towards contradiction that $\cC$ is not a a $t$-tail-insertion-detecting code, i.e., there exist $x,y\in \cC$ such that 
    $
    y\in \insoperator{t}(x).
    $
    There exists $\omega\in \partpermallgen$ such that $|\omega|\leq t$ and $y=\omega x$, i.e.,
    $
    x\in \deloperator{t}(y),
    $
    which results with a contradiction.
    Let $\mathcal{C}$ be a $t$-tail-insertion-detecting-code. Assume towards contradiction that $\cC$ is not a a $t$-tail-deletion-detecting code, i.e., there exist $x,y\in \cC$ such that 
    $
    y\in \deloperator{t}(x).
    $
    There exists $\omega\in \partpermallgen$ such that $|\omega|\leq t$ and $x=\omega y$, i.e.,
    $
    x\in \insoperator{t}(y),
    $
    which results with a contradiction.

    For the last part of the theorem, for any $q\geq 3$ and $t\geq 1$ the code $\cC=\{12,32\}$ is $t$-tail-deletion-detecting code, however, it is not a $2$-tail-indel-detecting code since $12\in \indeloperator{2}(32)$.
    \end{IEEEproof}

% The equivalence does not include tail-indel-detecting codes as shown in the following theorems
% \begin{theorem}\label{th:deldetnotindeldet}
%     There exists a code over permutations $\mathcal{C}\subseteq \partpermallgen$ that is a $t$-tail-deletion-detecting code that is not a $t$-tail-indel-detecting code.
% \end{theorem}
% \begin{IEEEproof}
%     For example, if $q=4$ the code $\cC=\{12,32\}$ is $2$-tail-deletion-detecting code, however, it is \textbf{not} a $2$-tail-indel-detecting code since $12\in \indeloperator{2}(32)$.
%     \end{IEEEproof}
    
% Moreover, for $q\geq 2$ and for any $t$ being able to detect $t$ deletions cannot imply being able to detect even a constant amount of indels such as $2$ indels as shown in the following theorem,
% \begin{theorem}
%     Let $q\geq2$, for any $t\geq 1$ exists a code over permutations $\mathcal{C}\subseteq \partpermallgen$ that is a $t$-tail-deletion-detecting code that is not a $2$-tail-indel-detecting code.
% \end{theorem}
% \begin{IEEEproof}
%     For example, the same code from the proof of Theorem~\ref{th:deldetnotindeldet}, i.e., $\cC=\{12,32\}$ is $t$-tail-deletion-detecting code for any $t$, however, it is \textbf{not} a $2$-tail-indel-detecting code.
%     \end{IEEEproof}

Lastly, we show in the following theorem that being able to detect $t$ tail insertions implies being able to correct $t$ tail insertions.
\begin{theorem}\label{th:allsamedel}
    If a code $\mathcal{C}$ is a $t$-tail-insertion-detecting code then it is a $t$-tail-insertion-correcting code.
\end{theorem}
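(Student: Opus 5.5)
The plan is to argue by contradiction. I will use the structural fact that tail insertions only ever prepend symbols, so any two partial permutations with a common extension must be suffixes of one another. Let $\cC$ be a $t$-tail-insertion-detecting code, so for every $\pi\in\cC$ we have $\cC\cap\insoperator{t}(\pi)=\{\pi\}$. Suppose $\cC$ is not $t$-tail-insertion-correcting. Then there are distinct $x,y\in\cC$ and some $z\in\insoperator{t}(x)\cap\insoperator{t}(y)$.

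First I unwind the definition of $\insoperator{t}$. There exist (possibly empty) strings $\omega,\omega'$ with $|\omega|\le t$ and $|\omega'|\le t$ such that $z=\omega x=\omega' y$. Equivalently, $x=z_{\downarrow_{|\omega|}}$ and $y=z_{\downarrow_{|\omega'|}}$. Hence both $x$ and $y$ are suffixes of the same partial permutation $z$.

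Next I order the two codewords by length; without loss of generality $|x|\le|y|$. If $|x|=|y|$, then both are the suffix of $z$ of that length, so $x=y$, a contradiction. So $|x|<|y|$, and $x$ is a proper suffix of $y$. Write $y=\alpha x$ with $|\alpha|=|y|-|x|$. Since $|y|\le|z|=|\omega|+|x|$, we get $|\alpha|\le|\omega|\le t$. The string $\alpha$ consists of distinct symbols not appearing in $x$, because $y$ is itself a partial permutation. Therefore $y$ is obtained from $x$ by $|\alpha|\le t$ tail insertions, i.e. $y\in\insoperator{t}(x)$. Since $y\ne x$, this contradicts $\cC\cap\insoperator{t}(x)=\{x\}$.

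The only step requiring care, and the main obstacle, is the bound $|\alpha|\le t$. It follows from comparing lengths inside $z$, as above. One should also check that the alphabet-size cap (lengths at most $q$) causes no issue. It does not: $y$ is already a valid partial permutation, so no constraint is violated when viewing it as an insertion of $x$. The rest is bookkeeping. By Theorem~\ref{th:allsamecor}, the same argument also shows that every $t$-tail-deletion-detecting code is $t$-tail-insertion-correcting.
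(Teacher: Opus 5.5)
Your proof is correct and follows essentially the same route as the paper: assume some $z$ lies in both $t$-tail-insertion balls, note that $x$ and $y$ are then both suffixes of $z$ (so equal lengths force $x=y$), and bound $|y|-|x|\le |z|-|x|\le t$ to conclude that $y$ lies in the $t$-tail-insertion ball of $x$, contradicting detection. Your explicit remark that the prefix $\alpha$ consists of symbols not in $x$ is a detail the paper leaves implicit.
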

\begin{IEEEproof}
    Let $\mathcal{C}$ be a $t$-tail-insertion-detecting code. Assume towards contradiction that $\cC$ is not a a $t$-tail-insertion-correcting code, i.e., there exist distinct $x,y\in \cC,z\in \partpermallgen$ such that 
    $
    z\in \insoperator{t}(x)\cap\insoperator{t}(y).
    $
    In other words there exist $x',y'$ such that $x'x=z=y'y$. If $|x|=|y|$ then $x=y$, which results with a contradiction. Assume w.l.o.g that $|y|>|x|$, one can note that $x$ is a suffix of $y$ and since $|z|-|x|\leq t$ then $|y|-|x|\leq t$, hence, 
    $
    y\in \insoperator{t}(x),
    $
    which results with a contradiction.
\end{IEEEproof}

\begin{corollary}
It holds that 
$
\codesizedetindel(q,t)\leq \codesizedetdel(q,t)=\codesizedetins(q,t)=\codesizecorins(q,t).
$
\end{corollary}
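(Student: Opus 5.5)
The corollary is a chain of one inequality and two equalities, and I will obtain each link from a result already stated; nothing new needs to be proved. Throughout I use that every code is a subset of $\partpermallgen$, and that the extremal quantities are maxima over such codes. So an inclusion between two classes of codes gives the corresponding inequality between their maxima, and equality of two classes gives equality of their maxima.

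\emph{First link:} $\codesizedetindel(q,t)\leq \codesizedetdel(q,t)$. By the definition of $\indeloperator{t}$, taking $i$ deletions and no insertions in the union formula gives $\deloperator{t}(\pi)\subseteq\indeloperator{t}(\pi)$ for every $\pi$. Hence if $\cC\cap\indeloperator{t}(\pi)=\{\pi\}$ for all $\pi\in\cC$, then also $\cC\cap\deloperator{t}(\pi)=\{\pi\}$. So every $t$-tail-indel-detecting code is a $t$-tail-deletion-detecting code, as already noted after the problem statement, and taking a maximum-size indel-detecting code gives the inequality.

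\emph{Second link:} $\codesizedetdel(q,t)=\codesizedetins(q,t)$. This follows from the first part of Theorem~\ref{th:allsamecor}. That result says the family of $t$-tail-deletion-detecting codes coincides with the family of $t$-tail-insertion-detecting codes, so their largest sizes are equal.

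\emph{Third link:} $\codesizedetins(q,t)=\codesizecorins(q,t)$, which needs two inequalities.
\begin{itemize}
\item $\codesizedetins(q,t)\leq\codesizecorins(q,t)$: Theorem~\ref{th:allsamedel} shows every insertion-detecting code is insertion-correcting.
\item $\codesizecorins(q,t)\leq\codesizedetins(q,t)$: suppose $\cC$ is $t$-tail-insertion-correcting, and take distinct $x,y\in\cC$ with $y\in\insoperator{t}(x)$. Since $y\in\insoperator{t}(y)$ trivially, $y\in\insoperator{t}(x)\cap\insoperator{t}(y)$, which contradicts correction. So every correcting code is detecting.
\end{itemize}

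The only step not literally stated earlier is this trivial converse that correcting implies detecting. It uses only that $\pi\in\insoperator{t}(\pi)$, so I expect no real obstacle. Chaining the three links gives the corollary.
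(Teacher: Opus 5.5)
Your proof is correct and matches how the paper obtains the corollary. The inequality comes from the earlier remark that every $t$-tail-indel-detecting code is $t$-tail-deletion-detecting, the first equality from Theorem~\ref{th:allsamecor}, and the second from the theorem that $t$-tail-insertion-detecting codes are $t$-tail-insertion-correcting. You also correctly supply the converse the paper leaves implicit: a correcting code is detecting, since otherwise $y\in\insoperator{t}(x)\cap\insoperator{t}(y)$ for distinct codewords $x,y$.
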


% \begin{corollary}
% It holds that 
% \[
% \codesizedetins(q,t)=\codesizecorins(q,t).
% \]
% \end{corollary}
% \begin{theorem}\label{th:allsamedel}
% The following statements are equivalent 
% \begin{enumerate}
% \item A code over permutations $\mathcal{C}\subseteq \partpermallgen$ is a $2t$-tail-deletion-correcting code.
% \item A code over permutations $\mathcal{C}\subseteq \partpermallgen$ is a $t$-tail-indel-correcting code.
% \end{enumerate}
% \end{theorem}

Next we show some relations between the different types of tail correcting codes. First, we show in the following theorems the relation between tail indel and deletion correcting codes.

\begin{theorem}\label{th:allsamedel}
    Let $t\geq 1$. If a code $\mathcal{C}$ is a $t$-tail-deletion-correcting code then it is a $t$-tail-insertion-correcting code.
However, the opposite does not hold.
    
    \end{theorem}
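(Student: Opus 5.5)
The forward direction follows from a short chain of the earlier results, so I will reduce it to them rather than work with the correction property directly.

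Let $\cC$ be a $t$-tail-deletion-correcting code. I first show that it is a $t$-tail-deletion-detecting code. Suppose distinct $x,y\in\cC$ satisfy $y\in\deloperator{t}(x)$. Since $y\in\deloperator{t}(y)$ always holds ($y_{\downarrow_0}=y$), we get $y\in\deloperator{t}(x)\cap\deloperator{t}(y)$. This contradicts the correcting property.

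By Theorem~\ref{th:allsamecor}, $\cC$ is then a $t$-tail-insertion-detecting code. By the preceding theorem, insertion detection implies insertion correction, so $\cC$ is a $t$-tail-insertion-correcting code. No step here is delicate.

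For the converse, I need a code that corrects $t$ tail insertions but not $t$ tail deletions, for some admissible $q$ and $t\ge 1$. By Theorem~\ref{th:allsamecor} and the preceding theorem, insertion correction is equivalent to deletion detection. So it suffices to find a deletion-detecting code in which two codewords share a common tail-deletion descendant.

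The natural candidate is $\cC=\{12,32\}$ with $q\ge 3$, already used in the proof of Theorem~\ref{th:allsamecor}.
\begin{itemize}
\item It is deletion-detecting: $\deloperator{t}(12)=\{12,2\}$ does not contain $32$, and $\deloperator{t}(32)=\{32,2\}$ does not contain $12$. Hence it is $t$-tail-insertion-correcting for every $t\geq 1$.
\item It is not correcting: $(12)_{\downarrow_1}=2=(32)_{\downarrow_1}$, so $2\in\deloperator{1}(12)\cap\deloperator{1}(32)\subseteq\deloperator{t}(12)\cap\deloperator{t}(32)$ for every $t\ge1$. Thus it is not a $t$-tail-deletion-correcting code.
\end{itemize}

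The only point needing care is matching the example to the quantifiers: the claim "the opposite does not hold" must be witnessed for each $t\ge1$ with some admissible $q$. The example above works uniformly for all $t\ge 1$ and all $q\ge3$. For $q=2$ I would use $\{01,10\}$ if needed, but I expect the intended statement only requires existence of some $q$.
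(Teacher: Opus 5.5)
Your proof is correct but organized differently from the paper's. The paper argues directly: from a common element $z=x'x=y'y$ of $\insoperator{t}(x)\cap\insoperator{t}(y)$ it deduces that the shorter codeword is a suffix of the longer one, at most $t$ symbols shorter. Hence $x\in\deloperator{t}(y)$, contradicting correction. In effect the paper re-derives inline the suffix argument behind Theorem~\ref{th:allsamecor} and the theorem after it. You instead isolate the trivial step ``deletion-correcting $\Rightarrow$ deletion-detecting'' and cite those two earlier results.

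Your decomposition shows the real content: insertion-correcting codes are exactly the deletion-detecting ones, so the forward implication is just ``correcting $\Rightarrow$ detecting'' for deletions. It also makes the converse systematic: you only need a deletion-detecting code with overlapping deletion balls, and verification involves deletion balls only. The paper, by contrast, asserts without checking that $\{23,43\}$ is insertion-correcting. Your witness $\{12,32\}$ works for every $q\ge 3$, not just $q\ge 4$. The paper's route has the advantage of being self-contained.

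One correction to your closing aside: $\{01,10\}$ is not a witness for $q=2$. Its balls $\deloperator{t}(01)=\{01,1\}$ and $\deloperator{t}(10)=\{10,0\}$ are disjoint, so it is deletion-correcting. In fact no witness exists for $q=2$. The only partial permutations are $0,1,01,10$, and a deletion-detecting code cannot contain $1$ together with $01$, or $0$ together with $10$. Every other pair has disjoint deletion balls, so for $q=2$ the converse actually holds. The ``opposite does not hold'' clause is therefore existential in $q$ (it fails precisely for $q\ge 3$), which is how both your main example and the paper's treat it. Drop the $q=2$ hedge rather than patch it.
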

\begin{IEEEproof}
    Let $\mathcal{C}$ be a $t$-tail-deletion-correcting code. Assume towards contradiction that $\cC$ is not a $t$-tail-insertion-correcting code, i.e., there exist distinct $x,y\in \cC,z\in \partpermallgen$ such that 
    $
    z\in \insoperator{t}(x)\cap\insoperator{t}(y).
    $
    In other words there exist $x',y'\in \partpermallgen$ such that $x'x=z=y'y$. If $|x|=|y|$ then $x=y$, which results with a contradiction. Assume w.l.o.g that $|y|>|x|$, one can note that there exists $w\in\partpermallgen$ such that $wx=y$.
    From $|z|-|x|\leq t$ it holds that $|y|-|x|\leq t$ since
    $t\geq |z|-|x|=|z|-|y|+|y|-|x|\geq |y|-|x|$.
    Hence, 
    $
    x\in \deloperator{t}(y),
    $
    which results with a contradiction.

    For the second part of the theorem, with $q\geq4$ and $t\geq 1$ the code $\cC=\{23,43\}$ is $t$-tail-insertion-correcting code, however, it is not a $t$-tail-deletion-correcting code since $3\in \deloperator{1}(23)\cap\deloperator{1}(43)$.
\end{IEEEproof}

% It is important to note that the opposite does not hold.
% \begin{theorem}\label{th:deltoins}
%     There exists a code over permutations $\mathcal{C}\subseteq \partpermallgen$ that is a $t$-tail-insertion-correcting code that is not a $t$-tail-deletion-correcting code.
% \end{theorem}
% \begin{IEEEproof}
%     For example, if $q=4$ the code $\cC=\{23,43\}$ is $1$-tail-insertion-correcting code, however, it is \textbf{not} a $t$-tail-deletion-correcting code since $3\in \deloperator{1}(23)\cap\deloperator{1}(43)$.
    
%     \end{IEEEproof}

\begin{corollary}
It holds that 
$
\codesizecordel(q,t)\leq \codesizecorins(q,t).
$
\end{corollary}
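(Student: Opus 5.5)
This corollary follows directly from the preceding theorem, so the plan is to pass from codes to maximum sizes. Fix $q$ and $t$, with $t\geq 1$ as in that theorem; the case $t=0$ is trivial, since every code then corrects both zero deletions and zero insertions. Let $\cC\subseteq\partpermallgen$ be a $t$-tail-deletion-correcting code of maximum size, so that $|\cC|=\codesizecordel(q,t)$. Such a code exists because $\partpermallgen$ is finite: by the Claim its size is $q!\sum_{i=0}^{q-1}\frac{1}{i!}$, so the maximum is attained.

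By the preceding theorem, the first part of the last Theorem~\ref{th:allsamedel}, every $t$-tail-deletion-correcting code is also a $t$-tail-insertion-correcting code. Hence $\cC$ is a $t$-tail-insertion-correcting code. Since $\codesizecorins(q,t)$ is the size of the largest such code, we get
\[
\codesizecordel(q,t)=|\cC|\leq\codesizecorins(q,t).
\]

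There is no real obstacle here. The only point to check is that both quantities are maxima over subsets of the same ambient set $\partpermallgen$, with the same $q$ and $t$. This holds by the definitions in Subsection~\ref{subsec:models}, where $\insoperator{t}$ implicitly depends on $q$. So the inclusion of the family of deletion-correcting codes in the family of insertion-correcting codes transfers directly to the inequality of their maximum sizes.

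Optionally, one could combine this with the earlier corollary to obtain the chain
\[
\codesizecordel(q,t)\leq\codesizecorins(q,t)=\codesizedetins(q,t)=\codesizedetdel(q,t),
\]
but this is not needed for the statement.
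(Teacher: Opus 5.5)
Your proposal is correct and is the same argument the paper relies on. The paper gives no separate proof of this corollary; it follows directly from the preceding theorem. Applying that theorem (``every $t$-tail-deletion-correcting code is a $t$-tail-insertion-correcting code'') to a maximum-size $t$-tail-deletion-correcting code gives the inequality between the two maxima, exactly as you do. Your separate treatment of $t=0$ is harmless but not needed: the implication part of that theorem never uses $t\geq 1$, which is required only for its counterexample $\{23,43\}$.
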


Next, we show that correcting $2t$ deletions implies correcting $t$ indels. Moreover, the use of $2t$ is necessary. However, the opposite does not hold.

\begin{theorem}\label{th:allsamedel}
It holds that
\begin{enumerate} 
    \item If a code $\mathcal{C}$ is a $2t$-tail-deletion-correcting code then it is a $t$-tail-indel-correcting code. However, the opposite does not hold.

    \item For every $t\geq 1$ there exists a code $\mathcal{C}$ that is a $(2t-1)$-tail-deletion-correcting code that is not a $t$-tail-indel-correcting code.

    % \item There exists a code over permutations $\mathcal{C}\subseteq \partpermallgen$ that is a $t$-tail-indel-correcting code that is not a $2t$-tail-deletion-correcting code.
\end{enumerate}

    % Moreover, the use of $2t$ is necessary.

    % However, the opposite does not hold. 
    
    \end{theorem}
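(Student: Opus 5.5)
The approach is to describe explicitly the tail-indel ball. A tail indel sequence starting from $\pi$ deletes some prefix and inserts fresh symbols on the left. Any word $z\in\indeloperator{t}(\pi)$ therefore has the form $z=\omega\,\pi_{\downarrow_j}$. Here $\omega$ is a (possibly empty) string of symbols not in $\pi_{\downarrow_j}$, and $j+|\omega|\le t$. We may assume deletions come before insertions, as in the displayed formula for $\indeloperator{t}(\pi)$.

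\textbf{Part 1, forward direction.} Let $\cC$ be $2t$-tail-deletion-correcting, and suppose towards contradiction that distinct $x,y\in\cC$ share $z\in\indeloperator{t}(x)\cap\indeloperator{t}(y)$. Write $z=\omega_1 x_{\downarrow_i}=\omega_2 y_{\downarrow_j}$ with $i+|\omega_1|\le t$ and $j+|\omega_2|\le t$. Both $x_{\downarrow_i}$ and $y_{\downarrow_j}$ are suffixes of $z$, so the shorter one, say $x_{\downarrow_i}$, is a suffix of the other. Then $x_{\downarrow_i}=(y_{\downarrow_j})_{\downarrow_k}$ with $k=|\omega_1|-|\omega_2|\le|\omega_1|$. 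We now have a common word $u=x_{\downarrow_i}\in\deloperator{i}(x)\cap\deloperator{j+k}(y)$, where $i\le t$ and $j+k\le j+|\omega_1|\le 2t$. Hence $u\in\deloperator{2t}(x)\cap\deloperator{2t}(y)$, a contradiction. Some care is needed about the convention that the last symbol cannot be deleted, but that only shrinks the balls and causes no trouble.

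\textbf{Part 1, failure of the converse.} I would take a code whose codewords have different last symbols but long lengths, so that indels cannot reach a common word while $2t$ deletions can. A cleaner choice is the pair $\cC=\{12,32\}$ type example, adjusted as follows. For $t$ indels, use $x=a\,b\,c$ and $y=d\,c$ with $t=1$. The word $bc$ is reachable from $x$ by one deletion. From $y$, one indel yields only $c$, $ed c$, or $y$ itself, so they never meet. Meanwhile two deletions of $x$ reach $c\in\deloperator{1}(y)$. So $\{abc,dc\}$ with distinct symbols ($q\ge4$) is $1$-indel-correcting but not $2$-deletion-correcting. Checking all elements of both balls is finite and routine. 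For general $t$ I would pad similarly, although the statement only requires one counterexample.

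\textbf{Part 2.} For a given $t$, I want two codewords whose deletion distance to a common word totals exactly $2t$ with each side needing $t$. Take $x=\alpha s$ and $y=\beta s$, where $|\alpha|=|\beta|=t$, the symbol sets of $\alpha$, $\beta$, $s$ are pairwise disjoint (or at least $\alpha,\beta$ differ in their last symbol), and $s$ is nonempty. Deleting $t$ symbols from each gives $s$, so the code is not $t$-indel-correcting. To make it $(2t-1)$-deletion-correcting, the deletion balls must be disjoint. Each element of $\deloperator{2t-1}(x)$ is a suffix of $x$, and the same holds for $y$. Suffixes of length greater than $|s|$ differ because the symbol just before $s$ differs. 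Suffixes of length at most $|s|$ require at least $t$ deletions from each codeword, and distinct deletion counts give distinct lengths. So we need the two deletion counts to add to something forcing more than $2t-1$ on one side. However, $s$ itself lies in both balls with $t$ deletions each, which breaks disjointness.

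So the pair must be asymmetric. I would use $x=\alpha s$ with $|\alpha|=2t$, and $y=\beta$ of length $t$ followed by the last $t$ symbols of $\alpha$ and then $s$, so that $y$ has the same length as $x$ and the two differ at position $t$. Then $z=\alpha_{\downarrow t}s$, i.e.\ $x$ with $t$ deletions, equals $y$ after $t$ deletions, which again lies in both balls. The obstacle is that any common suffix of two codewords lies in both deletion balls whenever both deletion counts are within the radius. The fix is to exploit insertion: take $x=\beta\,s$ and $y=\gamma\,s$ with $|\beta|=t$, $|\gamma|=2t$ and $\gamma$ ending in a symbol different from the last symbol of $\beta$. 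Now $y\to s$ takes $2t$ deletions, and $x\to s$ takes $t$. For indels, $x\to s$ takes $t$ deletions, and then $t$ insertions of the first $t$ symbols of $\gamma$ give the common word $\gamma_{\uparrow}$-prefix $s=y_{\downarrow t}$, reached from $y$ by $t$ deletions. For deletions, the only common words are suffixes of $s$, which need at least $2t$ deletions from $y$, so the balls of radius $2t-1$ are disjoint. This requires $q\ge 3t+1$ distinct symbols, which I would state as an assumption on $q$. The main step to verify is this disjointness and indel-reachability bookkeeping.
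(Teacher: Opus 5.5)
Your Part 1 is correct and is essentially the paper's argument: the paper works with the longest common suffix $x_i'$ of $x_i$ and $z$ instead of your explicit form $\omega\,\pi_{\downarrow_j}$, but the counting is the same. Your $\{abc,dc\}$ is also a valid witness that the converse fails.

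The gap is in Part 2: your final code $\{\beta s,\gamma s\}$ with $|\beta|=t$ and $|\gamma|=2t$ is \emph{not} a counterexample. The common word you exhibit is reached from $x=\beta s$ by $t$ deletions \emph{followed by} $t$ insertions. That is $2t$ tail indels, so the word lies in $\indeloperator{2t}(x)$, not in $\indeloperator{t}(x)$.

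In fact your code is $t$-tail-indel-correcting. Suppose $z=\omega\,x_{\downarrow_i}=\omega'\,y_{\downarrow_j}$ with $i+|\omega|\le t$ and $j+|\omega'|\le t$. Then $|x_{\downarrow_i}|\le |s|+t\le |y_{\downarrow_j}|$, so $x_{\downarrow_i}$ is a suffix of $y$. Since $\beta$ and $\gamma$ end in different symbols, this forces $i=t$ and $\omega$ empty, so $z=s$. That is impossible because $|y_{\downarrow_j}|\ge |s|+t$. Concretely, for $t=1$ and $q=4$, take the code $\{21,341\}$. Its balls $\indeloperator{1}(21)=\{21,1,321,421\}$ and $\indeloperator{1}(341)=\{341,41,2341\}$ are disjoint. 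Its balls $\deloperator{1}(21)$ and $\deloperator{1}(341)$ are also disjoint, so both properties hold.

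The missing idea is to make $\beta$ empty: the short codeword must itself be a suffix of the long one, with length difference exactly $2t$. Take $x=s$ and $y=\gamma s$ with $|\gamma|=2t$.
\begin{itemize}
\item Inserting $\gamma_{t+1}\cdots\gamma_{2t}$ into $x$ ($t$ insertions, no deletions) and deleting $\gamma_1\cdots\gamma_t$ from $y$ ($t$ deletions) both give $y_{\downarrow_t}$. Hence the code is not $t$-tail-indel-correcting.
\item Every word of $\deloperator{2t-1}(y)$ has length at least $|s|+1$, while every word of $\deloperator{2t-1}(x)$ has length at most $|s|$. Hence the code is $(2t-1)$-tail-deletion-correcting.
\end{itemize}
With $s=1$ and $\gamma=(2t+1)(2t)\cdots 2$ this is the paper's example $\{1,(2t+1)(2t)\cdots 21\}$, with common word $(t+1)\cdots 21$. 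It needs only $q\ge 2t+1$ rather than $3t+1$.
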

\begin{IEEEproof}

$(1)$: Let $\mathcal{C}$ be a $2t$-tail-deletion-correcting code. Assume towards contradiction that $\cC$ is not a $t$-tail-indel-correcting code, i.e., there exist $x_1,x_2\in \cC,z\in \partpermallgen$ such that 
    $
    z\in \indeloperator{t}(x_1)\cap\indeloperator{t}(x_2).
    $
    Since only $t$ operations occurred it holds that $||x_i|-|z||\leq t$ for $i\in \{1,2\}$.
    Let $x_i'$ be the longest mutual suffix of $x_i$ and $z$ for $i\in \{1,2\}$, and assume w.l.o.g that $|x_1'|\leq |x_2'|$. 
    Since $z\in \indeloperator{t}(x_i)$ and $x_i'$ is the longest mutual suffix of $x_i$ and $z$ then $z,x_i\in \insoperator{t}(x'_i)$ and $|z|-|x_i'|\leq t,|x_i|-|x_i'|\leq t$ for $i\in \{1,2\}$. 
    If $z\in \insoperator{t}(x_1')\cap \insoperator{t}(x_2')$ and $|x_1'|\leq |x_2'|$ then $x_1'$ is a suffix of $x_2'$, furthermore, $|x_2'|-|x_1'|\leq t$. Hence, $x_1'\in \deloperator{t}(x_2')$.
    If  $x_i\in \insoperator{t}(x_i')$ then $x_i'\in \deloperator{t}(x_i)$ for $i\in \{1,2\}$.
    If $x_1'\in \deloperator{t}(x_2')$ then $x_1'\in \deloperator{2t}(x_2)$. Since $x_1'\in \deloperator{t}(x_1)$ then $x_1'\in \deloperator{2t}(x_1)$, therefore, $x_1'\in \deloperator{2t}(x_1)\cap \deloperator{2t}(x_2)$,  which results with a contradiction.

   For the second part of the theorem, for example, for $q=4$ the code $\cC=\{321,241\}$ is $1$-tail-indel-correcting code, however, it is not a $2$-tail-deletion-correcting code since $1\in \deloperator{1}(321)\cap\deloperator{1}(241)$.

    $(2)$:
Let $t\geq 1$, the code $\cC=\{1,123\dots (2t+1)\}$ is a $2t-1$-tail-deletion-correcting code, however, it is not a $t$-tail-indel-correcting code since $(t+1)\dots 321\in \indeloperator{t}(1)\cap\indeloperator{t}(1, (2t+1)\dots 321).$
    
\end{IEEEproof}

% It is important to note that the opposite does not hold.
% \begin{theorem}\label{th:deltoins}
%     There exists a code over permutations $\mathcal{C}\subseteq \partpermallgen$ that is a $t$-tail-indel-correcting code that is not a $2t$-tail-deletion-correcting code.
% \end{theorem}
% \begin{IEEEproof}
%     For example, if $q=4$ the code $\cC=\{321,241\}$ is $1$-tail-indel-correcting code, however, it is \textbf{not} a $2$-tail-deletion-correcting code since $1\in \deloperator{1}(321)\cap\deloperator{1}(241)$.
    
%     \end{IEEEproof}

% Moreover, the use of $2t$ is necessary as shown in the following theorem,
% \begin{theorem}\label{th:allsamedel}
%     For every $t\geq 1$ there exists a code over permutations $\mathcal{C}\subseteq \partpermallgen$ that is a $2t-1$-tail-deletion-correcting code that is not a $t$-indel-deletion-correcting code.
% \end{theorem}
% \begin{IEEEproof}
%     For example, the code $\cC=\{1,123\dots (2t+1)\}$ is $2t-1$-tail-deletion-correcting code, however, it is \textbf{not} a $t$-tail-indel-correcting code since $$123\dots(t+1)\in \indeloperator{t}(1)\cap\indeloperator{t}(1,123\dots (2t+1)).$$
    
%     \end{IEEEproof}

\begin{corollary}
It holds that 
% \[
% \codesizecordel(q,2t)\leq \codesizecorindel(q,t).
% \]
% Since every $t$-tail-indel-correcting code is also a $t$-tail-deletion-correcting code it holds that 
$
\codesizecordel(q,2t)\leq \codesizecorindel(q,t) \leq \codesizecordel(q,t) .
$
\end{corollary}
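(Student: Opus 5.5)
The corollary consists of two inequalities, and I would prove each by a containment of families of codes. Since $\codesizecordel(q,\cdot)$ and $\codesizecorindel(q,\cdot)$ denote the sizes of the largest codes with the respective properties, an inequality of the form ``largest $A$-code $\leq$ largest $B$-code'' follows once every $A$-code is shown to be a $B$-code.

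For the left inequality $\codesizecordel(q,2t)\leq \codesizecorindel(q,t)$, I take a $2t$-tail-deletion-correcting code $\mathcal{C}$ of maximum size $\codesizecordel(q,2t)$. Part $(1)$ of the theorem just proved states that $\mathcal{C}$ is then a $t$-tail-indel-correcting code. Hence $|\mathcal{C}|\leq \codesizecorindel(q,t)$.

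For the right inequality $\codesizecorindel(q,t)\leq\codesizecordel(q,t)$, I take a $t$-tail-indel-correcting code $\mathcal{C}$ and show it is $t$-tail-deletion-correcting. The key observation is the containment $\deloperator{t}(\pi)\subseteq\indeloperator{t}(\pi)$ for every $\pi$. This is immediate from the stated formula
\[
\indeloperator{t}(\pi)=\cup _{i=0}^{t}\left\{\cup_{\pi '\in {\deloperator{i}({\pi})}}\left\{\insoperator{t-i}({\pi '})\right\}\right\},
\]
since $\pi'\in\insoperator{t-i}(\pi')$ for every $\pi'$, so each $\pi'\in\deloperator{i}(\pi)$ with $i\le t$ appears on the right-hand side. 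Now let $\pi_1\neq\pi_2$ in $\mathcal{C}$. Then
\[
\deloperator{t}(\pi_1)\cap\deloperator{t}(\pi_2)\subseteq\indeloperator{t}(\pi_1)\cap\indeloperator{t}(\pi_2)=\phi,
\]
so $\mathcal{C}$ is $t$-tail-deletion-correcting, and the inequality follows by maximality.

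There is no real obstacle here. The only point needing care is the containment of error balls, which in turn rests on $\insoperator{t}$ counting ``at most'' $t$ insertions, so that $\pi'\in\insoperator{0}(\pi')$. Both facts come directly from the definitions.
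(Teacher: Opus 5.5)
Your proposal is correct and matches the paper's reasoning: the left inequality is part (1) of the preceding theorem applied to a maximum-size $2t$-tail-deletion-correcting code. The right inequality is the paper's observation that every $t$-tail-indel-correcting code is also $t$-tail-deletion-correcting, and your containment $\deloperator{t}(\pi)\subseteq\indeloperator{t}(\pi)$ supplies the detail the paper leaves implicit.
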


\section{Code Constructions over Permutations}\label{sec:permcon}

In this section, we present constructions of $t$-tail-deletion detecting and correcting codes. 

% will define some notations that will be used in the rest of this paper.

% Let $\pi=(\pi_1,\pi_2,\dots,\pi_i)\in\partpermallgen$, we denote by $\pi\downarrow_j$ the partial permutation $\pi$ after $j$-tail deletions. For example 
% $
% (2341)\hspace{-1ex}\downarrow_2=41.
% $
% Since we cannot delete the last symbol it holds that,
% $
% (41)\hspace{-1ex}\downarrow_2=1.
% $
% Furthermore, one can observe that $\deloperator{t}(\pi)=\{\pi,\pi\downarrow_1,\pi\downarrow_2,\dots,\pi\downarrow_t\}$.
% We also define $\pi \downarrow_0=\pi.$

% For the next theorem we need the following construction.

\begin{theorem}\label{th:bestdetperm}
 Let $q$ be the size of the alphabet, and let $t< q$. Denote,
% $$
% \cC_{q,t}^{\mathsf{det}}=\bigcup_{1\leq i\leq q : \ i\equiv q\bmod (t+1)}\partperm{i}{q}.
% $$
$
\cC_{q,t}^{\mathsf{det}}=\bigcup_{i=0}^{\left\lfloor\frac{q-1}{t+1}\right\rfloor}\partperm{q-i(t+1)}{q}.
$
It holds that $\cC_{q,t}^{\mathsf{det}}$ is a $t$-tail-deletion-detecting code. Furthermore, $\cC_{q,t}^{\mathsf{det}}$ is an optimal code and so 
% $\codesizedetdel(q,t)=|\cC_{q,t}^{\mathsf{det}}|=\sum_{{1\leq i\leq q : \  i\equiv q\bmod (t+1)}}{|\partperm{i}{q}|}.$
$\codesizedetdel(q,t)=|\cC_{q,t}^{\mathsf{det}}|=\sum_{i=0}^{\left\lfloor\frac{q-1}{t+1}\right\rfloor}{|\partperm{q-i(t+1)}{q}|}=q!\cdot\sum_{i=0}^{\left\lfloor\frac{q-1}{t+1}\right\rfloor}{{\frac{1}{(i(t+1))!}}}.$
\end{theorem}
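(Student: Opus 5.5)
The plan has two parts: show that $\cC_{q,t}^{\mathsf{det}}$ detects $t$ tail deletions, then show that no detecting code can be larger.

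\emph{Detection.} Take $\pi\in\cC_{q,t}^{\mathsf{det}}$ of length $\ell=q-i(t+1)$. Every $\pi'\in\deloperator{t}(\pi)$ with $\pi'\neq\pi$ has the form $\pi_{\downarrow_j}$ with $1\le j\le t$. It has length $\max(\ell-j,1)$, which lies strictly between $\ell-(t+1)$ and $\ell$ when $\ell-j\ge 1$. If instead $\ell-j<1$, its length is $1$. Since $1$ is a codeword length only when $\ell=1$, and then $\pi_{\downarrow_j}=\pi$, this case adds nothing. All codeword lengths are congruent to $q$ modulo $t+1$, so no length strictly between $\ell-(t+1)$ and $\ell$ occurs in the code. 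Hence $\pi'\notin\cC_{q,t}^{\mathsf{det}}$, and the code is $t$-tail-deletion-detecting.

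\emph{Optimality.} Partition $\partpermallgen$ into chains indexed by full permutations $\sigma\in\partperm{q}{q}$. The chain of $\sigma$ is $\{\sigma_{\downarrow_0},\sigma_{\downarrow_1},\dots,\sigma_{\downarrow_{q-1}}\}$, which contains exactly one partial permutation of each length $1,\dots,q$. This is not a partition, because distinct $\sigma$ share suffixes. I therefore switch to a weighted or fractional counting argument. Every partial permutation of length $\ell$ is a suffix of exactly $(q-\ell)!$ full permutations. Give each element of length $\ell$ in a chain weight $1/(q-\ell)!$. Then $\sum_{\sigma}\sum_{\pi\in\text{chain}(\sigma)\cap\cC}1/(q-|\pi|)!=|\cC|$. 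It remains to bound, for each fixed chain, the quantity $\sum_{\pi\in\text{chain}\cap\cC}1/(q-|\pi|)!$ by $\sum_{i}1/(i(t+1))!$, and then multiply by $q!$.

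\emph{Bounding one chain.} Within a chain, distinct codewords are suffixes of one another. The detecting property then forces their lengths to differ by at least $t+1$: if $\ell_1>\ell_2\ge 1$ with $\ell_1-\ell_2\le t$, the shorter one is obtained from the longer by $\ell_1-\ell_2$ tail deletions. Note that length $1$ is reachable directly by deletions, so no extra care is needed there. So in the variable $d=q-\ell\in\{0,\dots,q-1\}$, we must maximize $\sum 1/d!$ over sets of $d$-values with pairwise gaps at least $t+1$. Since $1/d!$ is decreasing, an exchange argument gives the bound. Sort the chosen values $d_1<d_2<\cdots$. The gap condition gives $d_k\ge (k-1)(t+1)$, so $1/d_k!\le 1/((k-1)(t+1))!$. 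The number of chosen values is at most $\lfloor (q-1)/(t+1)\rfloor+1$. Summing gives exactly $\sum_{i=0}^{\lfloor (q-1)/(t+1)\rfloor}1/(i(t+1))!$.

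Multiplying by the $q!$ chains yields $|\cC|\le q!\sum_i 1/(i(t+1))!=|\cC_{q,t}^{\mathsf{det}}|$, where the last equality uses $|\partperm{\ell}{q}|=q!/(q-\ell)!$. The main obstacle is the double-counting step, namely checking that the weights exactly reproduce $|\cC|$. This holds because each partial permutation of length $\ell$ lies in exactly $(q-\ell)!$ chains.
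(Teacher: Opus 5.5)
Your proof is correct, but your optimality argument takes a different route from the paper's.

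The detection half is the same length-gap observation the paper uses, and you treat the truncation at length $1$ a bit more explicitly.

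For the upper bound, the paper uses no full-permutation chains and no weights. It takes the balls $\mathcal{B}^{t}_{\mathbf{del}}(\pi)=\{\pi,\pi_{\downarrow_1},\dots,\pi_{\downarrow_t}\}$ centered at every $\pi$ of length $q-i(t+1)$. Since $(\pi_{\downarrow_\ell})_{\downarrow_{k-\ell}}=\pi_{\downarrow_k}$, any two elements of such a ball are related by at most $t$ tail deletions, so each ball holds at most one codeword. These balls cover all partial permutations: extend any word by prepending symbols up to the nearest codeword length above it. Sending each codeword to a ball that contains it is therefore injective, so $|\mathcal{C}|$ is at most the number of centers, which is exactly $|\mathcal{C}_{q,t}^{\mathsf{det}}|$; overlaps between balls are harmless.

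Your argument is instead an LYM-style double count. You average over the $q!$ maximal chains with weight $1/(q-\ell)!$, reduce to packing values $d=q-|\pi|$ with gaps at least $t+1$ inside one chain, and bound that greedily.

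The paper's clique-cover argument is shorter and needs no weight identity, but the cover is tailored to the construction, so it presupposes the answer. Your route derives the extremal level pattern $d\in\{0,t+1,2(t+1),\dots\}$ from the per-chain optimization, and it would carry over more readily to other weightings or chain-local constraints.

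One small nit: $1/d!$ is only non-increasing, since $0!=1!$. That is all your step $1/d_k!\le 1/((k-1)(t+1))!$ needs.
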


\begin{IEEEproof}
First, we will show that $\cC_{q,t}^{\mathsf{det}}$ is a $t$-tail-deletion-detecting code.
Let $\pi=(\pi_1,\pi_2,\dots,\pi_j)\in\cC_{q,t}^{\mathsf{det}}$ where $j= q-i(t+1)$ for $i$ such that $0\leq i\leq {\left\lfloor\frac{q-1}{t+1}\right\rfloor}$. It holds that
% \[
% \deloperator{t}(\pi)=\{(\pi_1,\pi_2,\dots,\pi_i),(\pi_2,\pi_3,\dots\pi_i),\dots,(\pi_t,\pi_{t+1},\dots\pi_i)\},
% \]
$\deloperator{t}(\pi)=\{\pi,\pi_{\downarrow_1},\pi_{\downarrow_2},\dots,\pi_{\downarrow_{t}}\}$, therefore, for every $\ell\in\{j-t,j-t+1,\dots,j-1\}\cap \mathbf{N}$ it holds that
$\ell \not= q-i(t+1)$ for $i$ such that $0\leq i\leq {\left\lfloor\frac{q-1}{t+1}\right\rfloor}$, and thus,
$
\mathcal{C}\cap \deloperator{t}(\pi)= \{\pi\}.
$

Now we will show that $\cC_{q,t}^{\mathsf{det}}$ is optimal. First, we find an upper bound for $t$-tail-deletion-detecting codes. Let $\cC$ be a $t$-tail-deletion-detecting code. Let $\pi=(\pi_1,\pi_2,\dots,\pi_j)\in\partperm{j}{q}$
where $j= q-i(t+1)$ for $i$ such that $0\leq i\leq {\left\lfloor\frac{q-1}{t+1}\right\rfloor}$.
It holds that at most one of the permutations in the set $\{\pi,\pi_{\downarrow_1},\pi_{\downarrow_2},\dots,\pi_{\downarrow_{t}}\}$ can be in $\cC$ since it is a $t$-tail-deletion-detecting code, and for every $0\leq\ell<k\leq t$ it holds that $(\pi_{\downarrow_\ell})_{\downarrow_{(k-\ell)}}=\pi_{\downarrow_k}$. Since $0<k-\ell\leq t$, it holds that $\pi_{\downarrow_{k}}\in \deloperator{t}(\pi_{\downarrow_\ell})$. 
It holds that 
$
\bigcup_{i=0}^{\left\lfloor\frac{q-1}{t+1}\right\rfloor}\bigcup_{\pi\in \partperm{q-i(t+1)}{q}}{\deloperator{t}(\pi)}=\partpermallgen.
$
 % One can note that the sets  $\{\pi,\pi_{\downarrow_1},\pi_{\downarrow_2},\dots,\pi_{\downarrow_{t}}\}$ for every $\pi\in\partperm{j}{q}$ for any
 % $j=q-i(t+1)$ such that $0\leq i\leq {\left\lfloor\frac{q-1}{t+1}\right\rfloor}$
 % cover $\partpermallgen$,
Hence,
$
|\cC|\leq \sum_{i=0}^{\left\lfloor\frac{q-1}{t+1}\right\rfloor}{|\partperm{q-i(t+1)}{q}|}.
$ It is important to note that the statement holds even though the deletion balls are not necessarily disjoint.

By construction it holds that $
\cC_{q,t}^{\mathsf{det}}=\bigcup_{i=0}^{\left\lfloor\frac{q-1}{t+1}\right\rfloor}\partperm{q-i(t+1)}{q}.
$
% for any $i\equiv q\bmod (t+1)$ it holds that $\partperm{i}{q}\subset\cC_{q,t}^{\mathsf{det}}$. 
Therefore,
$
|\cC_{q,t}^{\mathsf{det}}|= \sum_{i=0}^{\left\lfloor\frac{q-1}{t+1}\right\rfloor}{|\partperm{q-i(t+1)}{q}|}
$,
and hence, the code $\cC_{q,t}^{\mathsf{det}}$ is optimal.

% contains $\pi$, therefore, it contains exactly one element in $\{\pi,\pi\downarrow_1,\pi\downarrow_2,\dots,\pi\downarrow_t\}$. One can note that the sets  $\{\pi,\pi\downarrow_1,\pi\downarrow_2,\dots,\pi\downarrow_t\}$ for $\pi\in\partperm{i}{q}$ for any $i\equiv q\bmod (t+1)$ are disjoint and cover $\partpermallgen$,
% hence,
% $
% |\cC|\leq |\cC_{q,t}|.
% $

% Let $\cC\subseteq\partpermallgen$ be a largest $t$-tail-deletion-detecting code different than $\cC_{q,t}^{\mathsf{det}}$. Let Let $\pi=(\pi_1,\pi_2,\dots,\pi_i)\in\cC\setminus\cC_{q,t}^{\mathsf{det}}$ be a partial permutation in $\cC$ and not $\cC_{q,t}^{\mathsf{det}}$ of longest length. It holds that $i<q$ since $\partperm{q}{q}\subset\cC_{q,t}^{\mathsf{det}}$. 
% Furthermore, it holds that $i\not\equiv q\bmod (t+1)$. Let $i<j\leq q$ be the smallest integer such that $i\not\equiv q\bmod (t+1)$.

\end{IEEEproof}

For the rest of the paper, we order the partial permutations in $\insoperatorsphere{t}(\pi)$ in lexicographic increasing order and now we can use the $i$'th partial permutation with the notation of $(\insoperatorsphere{t}(\pi))_i$.
For example, with $q=4$ it holds that $
\insoperatorsphere{2}(41)=\{2341,3241\}
$. Therefore,
${(\insoperatorsphere{2}(41))_1}=2341$ and ${(\insoperatorsphere{2}(41))_2}=3241.
$

For $\cC\subseteq \partpermallgen$, we denote by $(\insoperatorsphere{t}(\cC))_j$ the union of all $(\insoperatorsphere{t}(\pi))_j$ for $\pi\in \cC$. For example, for $q=4$ if $\cC=\{41,1\}$ it holds that $(\insoperatorsphere{2}(\cC))_1=\{2341,231\}$ and $(\insoperatorsphere{2}(\cC))_2=\{3241,241\}$.
The following code is our starting point for the construction of $t$-tail-deletion-correcting codes. Denote,
$
\cC_{q,t}^{\mathsf{base}}=\bigcup_{i=0}^{\left\lfloor\frac{q-t-1}{t+1}\right\rfloor}
\partperm{q-t-i(t+1)}{q}.
$

% $$
% \cC_{q,t}^{\mathsf{base}}=\bigcup_{1\leq i\leq q-t : \ i\equiv q-t\bmod (t+1)}\partperm{i}{q}.
% $$

\begin{lemma}\label{lm:ballinup}
Let $q$ be the size of the alphabet, and let $t< q$. For every $1\leq j\leq t!$ it holds that the code $\left(\insoperatorsphere{t}(\cC_{q,t}^{\mathsf{base}})\right)_j$ is a $t$-tail deletion-correcting code and $
|\left(\insoperatorsphere{t}(\cC_{q,t}^{\mathsf{base}})\right)_{j}|= |\cC_{q,t}^{\mathsf{base}}|
=q!\cdot\sum_{i=0}^{\left\lfloor\frac{q-t-1}{t+1}\right\rfloor}{{\frac{1}{(t+i(t+1))!}}}.$ Furthermore, for every $1\leq j_1< j_2\leq t!$ it holds that the codes
$
\left(\insoperatorsphere{t}(\cC_{q,t}^{\mathsf{base}})\right)_{j_1},\left(\insoperatorsphere{t}(\cC_{q,t}^{\mathsf{base}})\right)_{j_2}$
are mutually disjoint.
Furthermore, if $q\not\equiv 0\bmod (t+1)$ it holds that the code $\left(\insoperatorsphere{t}(\cC_{q,t}^{\mathsf{base}})\right)_{1}\cup \partperm{1}{q}$ 
 is a $t$-tail deletion-correcting code.
\end{lemma}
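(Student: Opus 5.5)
The plan rests on one structural fact about $\cC_{q,t}^{\mathsf{base}}$: its lengths are exactly $q-t-i(t+1)$, so any two distinct lengths differ by a multiple of $t+1$, and every length is at most $q-t$. Consequently $\insoperatorsphere{t}(\pi)$ is well defined with $|\insoperatorsphere{t}(\pi)|=\binom{q-|\pi|}{t}t!\ge t!$ for every $\pi$ in the base code, because $q-|\pi|\ge t$. Hence the $j$'th element $(\insoperatorsphere{t}(\pi))_j$ exists for every $1\le j\le t!$.

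First I would prove the size and disjointness claims.
\begin{itemize}
\item The map $\pi\mapsto(\insoperatorsphere{t}(\pi))_j$ is injective, since applying $t$ tail deletions recovers $\pi$: $((\insoperatorsphere{t}(\pi))_j)_{\downarrow_t}=\pi$. This gives $|(\insoperatorsphere{t}(\cC_{q,t}^{\mathsf{base}}))_j|=|\cC_{q,t}^{\mathsf{base}}|$.
\item Summing $|\partperm{m}{q}|=q!/(q-m)!$ over $m=q-t-i(t+1)$ yields $q!\sum_i 1/(t+i(t+1))!$.
\item For disjointness with $j_1\ne j_2$: suppose $(\insoperatorsphere{t}(\pi))_{j_1}=(\insoperatorsphere{t}(\sigma))_{j_2}$. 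Deleting $t$ symbols gives $\pi=\sigma$, and then distinct indices in the same ordered set give distinct elements, a contradiction.
\end{itemize}

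Next, the correcting property. Take distinct codewords $x=(\insoperatorsphere{t}(\pi))_j$ and $y=(\insoperatorsphere{t}(\sigma))_j$ with $\pi\ne\sigma$, and suppose $\deloperator{t}(x)\cap\deloperator{t}(y)\ne\emptyset$. Then there is a common word $z=x_{\downarrow_a}=y_{\downarrow_b}$ with $a,b\le t$.
\begin{itemize}
\item \emph{Equal lengths.} If $|x|=|y|$, then $|\pi|=|\sigma|$ and $a=b$. Since $a\le t$, we get $\pi=x_{\downarrow_t}=z_{\downarrow_{t-a}}=y_{\downarrow_t}=\sigma$, a contradiction. The truncation convention does not interfere, because $|z|\ge|\pi|\ge1$.
\item \emph{Different lengths.} If $|x|\neq|y|$, then $||x|-|y||=||\pi|-|\sigma||\ge t+1$, whereas $||x|-|y||=|a-b|\le t$ whenever the deletions do not saturate.
\end{itemize}

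The subtle point is the saturation convention $(41)_{\downarrow_2}=1$. I must check that no deletion ever reaches length $1$ prematurely. Every codeword has length $|\pi|+t\ge t+1$, so at most $t$ deletions leave length at least $1$ without truncation. Thus $a,b$ are genuine deletion counts, $|x|-a=|y|-b$, and the different-lengths case is indeed a contradiction.

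Finally, the augmented code $(\insoperatorsphere{t}(\cC_{q,t}^{\mathsf{base}}))_1\cup\partperm{1}{q}$. This is the step I expect to be the main obstacle.
\begin{itemize}
\item Two distinct singletons have disjoint balls, since $\deloperator{t}(a)=\{a\}$.
\item For a singleton $a$ against a codeword $x$ of length $|\pi|+t$, the balls meet iff $x_{\downarrow_k}=a$ for some $k\le t$. Since $|x|\ge t+1$, that would force $|x|=t+1$ exactly with last symbol $a$.
\item Length $t+1$ means $|\pi|=1$. Lengths in the base code are $\equiv q-t \pmod{t+1}$, so length $1$ occurs iff $q-t\equiv1$, i.e. $q\equiv 0\pmod{t+1}$.
\end{itemize}
Under the hypothesis $q\not\equiv0\bmod(t+1)$, no base word has length $1$, so every codeword has length at least $t+2$, and its $t$-deletion ball consists of words of length at least $2$, disjoint from all singletons. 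Note that this last step does not use $j=1$, so the same argument would work for any fixed $j$. The care required is only in handling lengths, plus double-checking that the lengths equal $1$ exactly when $q\equiv0\pmod{t+1}$.
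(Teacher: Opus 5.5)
Your proof is correct and follows essentially the same route as the paper. Both arguments write each codeword as a length-$t$ prefix on a base word and recover the base word by $t$ tail deletions, which gives the size and disjointness claims. For the correcting property, both use that distinct base lengths differ by at least $t+1$, with equal lengths handled by suffix recovery. For the augmented code, both show that every codeword has length at least $t+2$ when $q\not\equiv 0 \bmod (t+1)$.

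Your explicit handling of the length gap and of the saturation convention spells out what the paper compresses into ``since $x_1\neq x_2$''.
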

\begin{IEEEproof}
Let $1\leq j\leq t!$ and distinct $\pi_1,\pi_2\in \left(\insoperatorsphere{t}(\cC_{q,t}^{\mathsf{base}})\right)_j$.
By construction, there exist distinct $x_1,x_2\in \cC_{q,t}^{\mathsf{base}}$ and $y_1,y_2\in \partperm{t}{q}$ such that $\pi_i=y_ix_i$ for $i\in \{1,2\}$. It holds that $\deloperator{t}(\pi_i)=\{{y_i}_{\downarrow_0} x_i,{y_i}_{\downarrow_1}x_i,\dots,{y_i}_{\downarrow_t}x_i\}$. Since $x_1\neq x_2$ it holds that $\deloperator{t}(\pi_1)\cap \deloperator{t}(\pi_2)=\phi$, hence, the code $\left(\insoperatorsphere{t}(\cC_{q,t}^{\mathsf{base}})\right)_j$ is a $t$-tail-deletion-correcting code.
By construction the code $\left(\insoperatorsphere{t}(\cC_{q,t}^{\mathsf{base}})\right)_j$ is of the same size as $\cC_{q,t}^{\mathsf{base}}$.

Let $1\leq j_1< j_2\leq t!$ and $\pi_i\in \left(\insoperatorsphere{t}(\cC_{q,t}^{\mathsf{base}})\right)_{j_i}$ for $i\in \{1,2\}$. By construction, there exist $x_1,x_2\in \cC_{q,t}^{\mathsf{base}}$ and $y_1,y_2\in \partperm{t}{q}$ such that $\pi_i=y_ix_i$ for $i\in \{1,2\}$. 
If $x_1=x_2$ then it holds that $\pi_i=(\insoperatorsphere{t}(x_1))_{j_i}$. Since $j_1\neq j_2$ it holds that $\pi_1\neq \pi_2$.
If $x_1\neq x_2$ there exist $y_1,y_2\in \partperm{t}{q}$ such that $\pi_i=y_ix_i$ for $i\in \{1,2\}$, and hence,
$\pi_1=y_1x_1\neq y_2x_2=\pi_2$. Therefore, the codes are disjoint.

Lastly, consider the case where $q\not\equiv 0\bmod (t+1)$.
 For every $\pi \in \partperm{1}{q}$ it holds that $\deloperator{t}(\pi)=\{\pi\}$. For any distinct $\pi_1,\pi_2\in \partperm{1}{q}$ it holds that $\deloperator{t}(\pi_1)\cap \deloperator{t}(\pi_2)\neq \phi$.
Furthermore, for any distinct $\pi_1,\pi_2\in \left(\insoperatorsphere{t}(\cC_{q,t}^{\mathsf{base}})\right)_{1}$ it holds that $\deloperator{t}(\pi_1)\cap \deloperator{t}(\pi_2)\neq \phi$ since it is a $t$-tail-deletion-correcting code.

 For every $\pi\in \left(\insoperatorsphere{t}(\cC_{q,t}^{\mathsf{base}})\right)_{1}$ it holds that $|\pi|\geq t+ (q-t)-{\left\lfloor\frac{q-t-1}{t+1}\right\rfloor}(t+1) > t+(q-t)-(q-t-1)=t+1$, while the second inequality derives from $q\not\equiv 0\bmod (t+1)$.
 Hence, $|\pi|\geq t+2$. Therefore, $\deloperator{t}(\pi)\cap \partperm{1}{q}=\phi$.
 In summary, for any distinct $\pi_1,\pi_2\in \left(\insoperatorsphere{t}(\cC_{q,t}^{\mathsf{base}})\right)_{1}\cup \partperm{1}{q}$ it holds that $\deloperator{t}(\pi_1)\cap \deloperator{t}(\pi_2)\neq \phi$, and therefore, the code $\left(\insoperatorsphere{t}(\cC_{q,t}^{\mathsf{base}})\right)_{1}\cup \partperm{1}{q}$ is a $t$-tail deletion-correcting code

\end{IEEEproof}

% The following theorem is an immediate results of the previous lemma.

\begin{theorem}
 Let $q$ be the size of the alphabet, and let $t< q$. If $q\equiv 0\bmod (t+1)$ it holds that  $\left(\insoperatorsphere{t}(\cC_{q,t}^{\mathsf{base}})\right)_{1}$ is an optimal $t$-tail-deletion-correcting code, i.e.,
 $
 \codesizecordel(q,t)=|\left(\insoperatorsphere{t}(\cC_{q,t}^{\mathsf{base}})\right)_{1}|=|\cC_{q,t}^{\mathsf{base}}|.
 $
% \begin{align*}
% \codesizecordel(q,t)&=\sum_{1\leq i\leq q-t : \ i\equiv q-t\bmod (t+1)}{|\partperm{i}{q}|}\\
%     &=\sum_{1\leq i\leq q-t : \ i\equiv q-t\bmod (t+1)}\frac{q!}{(q-i)!}.
% \end{align*}
% \begin{align*}
% \codesizecordel(q,t)&=\sum_{i=0}^{\left\lfloor\frac{q-t}{t+1}\right\rfloor}{|\partperm{q-t-i(t+1)}{q}|}=\sum_{i=0}^{\left\lfloor\frac{q-t}{t+1}\right\rfloor}\frac{q!}{(t+i(t+1))!}.
% \end{align*}
If $q\not\equiv 0\bmod (t+1)$ it holds that $\left(\insoperatorsphere{t}(\cC_{q,t}^{\mathsf{base}})\right)_{1}\cup\{1,2,3,\dots,q\}$ is an optimal $t$-tail-deletion-correcting code, i.e.,
$
 \codesizecordel(q,t)=|\cC_{q,t}^{\mathsf{base}}|+q.
 $
% \begin{align*}
% \codesizecordel(q,t)&=|\partperm{1}{q}|+\sum_{i=0}^{\left\lfloor\frac{q-t}{t+1}\right\rfloor}{|\partperm{q-t-i(t+1)}{q}|}\\
%     &=q+\sum_{i=0}^{\left\lfloor\frac{q-t}{t+1}\right\rfloor}\frac{q!}{(t+i(t+1))!}.
% \end{align*}
    
\end{theorem}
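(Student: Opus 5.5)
The plan is to pair the achievability statements of Lemma~\ref{lm:ballinup} with a matching upper bound from a level-counting argument. The bound uses the same covering idea as the proof of Theorem~\ref{th:bestdetperm}, but now relies on the disjointness of the balls.

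\emph{Achievability.} By Lemma~\ref{lm:ballinup}, $\left(\insoperatorsphere{t}(\cC_{q,t}^{\mathsf{base}})\right)_{1}$ is a $t$-tail-deletion-correcting code of size $|\cC_{q,t}^{\mathsf{base}}|$. When $q\not\equiv 0 \bmod (t+1)$, the same lemma shows that adding $\partperm{1}{q}=\{1,\dots,q\}$ keeps the code $t$-tail-deletion-correcting. The lemma's proof also shows that every codeword of $\left(\insoperatorsphere{t}(\cC_{q,t}^{\mathsf{base}})\right)_{1}$ has length at least $t+2$, so the two parts are disjoint and the size is $|\cC_{q,t}^{\mathsf{base}}|+q$.

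\emph{Converse.} Let $\cC$ be any $t$-tail-deletion-correcting code, and put $T=\{q-t-i(t+1): 0\le i\le \lfloor\frac{q-t-1}{t+1}\rfloor\}$, the set of lengths used in $\cC_{q,t}^{\mathsf{base}}$. Then $|\cC_{q,t}^{\mathsf{base}}|$ is exactly the number of partial permutations whose length lies in $T$. For $\pi$ of length $\ell$, the ball $\deloperator{t}(\pi)$ contains exactly one element of each length in $\{\max(1,\ell-t),\dots,\ell\}$. Its elements are suffixes of $\pi$, so the element of each length is uniquely determined.

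We use two facts about $T$. First, its elements are spaced exactly $t+1$ apart, and its largest element is $q-t$. Second, its smallest element is $s=((q-t-1)\bmod (t+1))+1\le t+1$.

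\emph{Case $q\equiv 0\bmod (t+1)$.} Here $q-t\equiv 1 \bmod (t+1)$, so $s=1$. Consider a codeword of length $\ell$.
\begin{itemize}
\item If $\ell\ge t+1$, its window $\{\ell-t,\dots,\ell\}$ consists of $t+1$ consecutive integers in $[1,q]$. Such a window meets $T$, since $T$ has spacing $t+1$, starts at $s=1$ and ends at $q-t$, so the lengths above $q-t$ form a window of only $t$ integers.
\item If $\ell\le t$, the window contains $1\in T$.
\end{itemize}
So every ball contains a partial permutation whose length lies in $T$. Since the balls are pairwise disjoint, we obtain $|\cC|\le |\cC_{q,t}^{\mathsf{base}}|$.

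\emph{Case $q\not\equiv 0\bmod (t+1)$.} Now $s\ge 2$. Split $\cC$ by length.
\begin{itemize}
\item Codewords with $\ell\ge s$: the window $\{\max(1,\ell-t),\dots,\ell\}$ still meets $T$. If $\ell-t\le 1$, the window contains $s$ because $s\le t+1$. Otherwise the window consists of $t+1$ consecutive integers, and the argument of the first case applies. Disjointness bounds the number of such codewords by $|\cC_{q,t}^{\mathsf{base}}|$.
\item Codewords with $\ell<s\le t+1$: each ball contains the codeword's length-$1$ suffix. Disjointness forces these suffixes to be distinct, so there are at most $q$ such codewords.
\end{itemize}
Summing gives $|\cC|\le|\cC_{q,t}^{\mathsf{base}}|+q$.

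I expect the main obstacle to be the boundary bookkeeping in the converse. One must verify that every window meets $T$, in particular windows near the top (lengths up to $q$, above $q-t$) and windows truncated at length $1$. The key point is the determination of the minimum $s$ of $T$, namely that $s=1$ exactly when $q\equiv 0\bmod(t+1)$.
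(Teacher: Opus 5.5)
Your proof is correct and follows essentially the same route as the paper. Achievability comes from Lemma~\ref{lm:ballinup}. For the converse, the paper bounds the number of codewords whose length lies in a window $[\tau,\tau+t]$, $\tau\in T$ (plus the window $[1,t+1]$ when $q\not\equiv 0 \bmod (t+1)$), by the number of length-$\tau$ partial permutations, via pigeonhole on length-$\tau$ suffixes; this is exactly your injection sending each codeword to the element of its ball with length in $T$ (or to its length-$1$ suffix).
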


% \begin{theorem}
% Let $q$ be the size of the alphabet and let $t$ be the number of deletions we want to correct. If $q\equiv 0\bmod (t+1)$ it holds that
% \begin{align*}
% \codesizecordel(q,t)&=\sum_{1\leq i\leq q-t : \ i\equiv q-t\bmod (t+1)}{|\partperm{i}{q}|}\\
%     &=\sum_{1\leq i\leq q-t : \ i\equiv q-t\bmod (t+1)}\frac{q!}{(q-i)!}.
% \end{align*}
% \end{theorem}
\begin{IEEEproof}
% From Lemma~\ref{lm:ballinup} it holds that $\left(\cC_{q,t}^{\mathsf{base}}\uparrow_t\right)_{1}$ is a $t$-tail-deletion-correcting code. Furthermore,
% $|\left(\cC_{q,t}^{\mathsf{base}}\uparrow_t\right)_{1}|=|\cC_{q,t}^{\mathsf{base}}|=\sum_{i=0}^{\left\lfloor\frac{q-t-1}{t+1}\right\rfloor}{|\partperm{q-t-i(t+1)}{q}|}$.
First, we will bound the size of a $t$-tail-deletion-correcting code.
Let $\cC$ be a $t$-tail-deletion-correcting code.
 For every $1\leq i\leq q-t$ it holds that $|\{\cup_{0\leq j \leq t}{\partperm{i+j}{q}}\}\cap\cC|\leq |{\partperm{i}{q}}|$. Otherwise, using the pigeonhole principle there exist $\pi_1,\pi_2\in \cC$ such that $\deloperator{t}(\pi_1) \bigcap \deloperator{t}(\pi_2)\neq \phi.$

If $q\equiv 0\bmod (t+1)$ the sets $\{\cup_{0\leq j \leq t}{\partperm{q-t-i(t+1)+j}{q}}\}$ for $i\in \left\{0,1,2,\dots,{\left\lfloor\frac{q-t-1}{t+1}\right\rfloor}\right\}$
cover $\partpermallgen$, and
hence,
$
|\cC|\leq \sum_{i=0}^{\left\lfloor\frac{q-t-1}{t+1}\right\rfloor}{|\partperm{q-t-i(t+1)}{q}|}=|\cC_{q,t}^{\mathsf{base}}|.
$

If $q\not\equiv 0\bmod (t+1)$ the sets $\{\cup_{0\leq j \leq t}{\partperm{q-t-i(t+1)+j}{q}}\}$ for $i\in \left\{0,1,2,\dots,{\left\lfloor\frac{q-t-1}{t+1}\right\rfloor}\right\}$ and the set $\{\cup_{1\leq j \leq t+1}{\partperm{j}{q}}\}$
cover $\partpermallgen$, and
hence,
$
|\cC|\leq {|\partperm{1}{q}|}+\sum_{i=0}^{\left\lfloor\frac{q-t-1}{t+1}\right\rfloor}{|\partperm{q-t-i(t+1)}{q}|}=q+|\cC_{q,t}^{\mathsf{base}}|.
$

If $q\equiv 0\bmod (t+1)$ it holds that $|\left(\insoperatorsphere{t}(\cC_{q,t}^{\mathsf{base}})\right)_{1}|=|\cC_{q,t}^{\mathsf{base}}|$, and therefore, using Lemma~\ref{lm:ballinup} the code is optimal. If $q\not\equiv 0\bmod (t+1)$  it holds that $|\left(\insoperatorsphere{t}(\cC_{q,t}^{\mathsf{base}})\right)_{1}\cup \partperm{1}{q}|=|\cC_{q,t}^{\mathsf{base}}|+q$, and therefore, using again Lemma~\ref{lm:ballinup} the code is optimal.  

\end{IEEEproof}

\section{Codes over Vectors of Permutations}\label{sec:defvector}

In this section, we are interested in vectors of partial permutations, i.e., each one of the symbols in the vector is a partial permutation. This model fits the setup of using rank modulated DNA for storage, where in each position of the strand we are able to use a partial permutation. 

\subsection{Error Model and Problem Statement}
Let $\bfu,\bfv\in (\partpermallgen )^{n}$ be two vectors of partial permutations. It is said that $\bfu$ experienced \emph{$(t,e)$-tail deletions}, if for at most $e$ indices, the partial permutation $u_i$ experienced at most $t$ tail deletions from $v_i$.

\begin{example}
    For $q=5, n=2$ it holds that $(1345,135),(45,135)$ are vectors in $\{\partpermallgen\}^{n}$. It holds that $2$-tail deletions occurred from $1345$ to $45$.
    % \begin{align*}
    %     \distau(135, 351)=2.\\
    %     \distau(134, 134)=0.\\
    % \end{align*}
    Therefore, the vector $(45,135)$ experienced $(2,1)$-tail deletions from the vector $(1345,135)$. 
\end{example}

In the following section we call a subset of sequences $\cC\subset \partpermallgenseq$ a code. We say that a code over $\partpermallgenseq$ is a \emph{$(t,e)$-tail-deletion-detecting code} if it can detect any $(t,e)$-tail deletion. Similarly, a code over $\partpermallgenseq$ is a \emph{$(t,e)$-tail-deletion-correcting code} if it can correct any $(t,e)$-tail deletion. 
%Let $n$ be the length of the sequence, $\motifsize,\wordsize,t,e\in\mathbb{N}$.
Lastly, we denote by $\codesizedetdelseq(q,t,e), \codesizecordelseq(q,t,e)$, the largest length-$n$ $(t,e)$-tail-deletions detecting, correcting code, respectively. Codes that correct these types of deletions will be referred as \emph{tail deletion permutation codes}.

\begin{problem}\label{pb:codes}
    For every $n,q,t,e\in\mathbf{N}$, find the sizes of $\codesizedetdelseq(q,t,e)$ and $\codesizecordelseq(q,t,e)$ and find efficient constructions for such codes.
\end{problem}

\subsection{Constructions and Bounds}

% \section{Results for Tail Deletion Permutation Codes}

In this subsection, we present a similar construction of the construction of \emph{Tensor Permutation Codes} over partial permutations that was presented in~\cite{RMC25}. Furthermore, we present the same notations. 
In our paper the partition is over all partial permutations of any size and not only of the same size.

The idea behind this construction is based on the work of tensor product codes using two linear codes~\cite{TENS}, however, in our case the set of permutations does not constitute a linear space and thus cannot be easily used with the tensor product framework.

A family of sets ${\mathcal{R}={\mathcal{A}}_0,{\mathcal{A}}_1,\dots,{\mathcal{A}}_{\ell-1}\subseteq\partpermallgen}$ is called a \emph{partial partition} of size $\ell$ of $\partpermallgen$ if for every $i\neq j$ it holds that
$
{\mathcal{A}}_i\cap {\mathcal{A}}_j=\phi.
$ A partial partition $\mathcal{R}$
is called a \emph{partition} if 
$
\cup_{i=0}^{\ell-1}{{\mathcal{A}}_i}=\partpermallgen.
$
If for every $i$, the code ${\mathcal{A}}_i$ is a $t$-tail-deletion-detecting code then we say that the partial partition $\mathcal{R}$ is a \emph{$t$-tail-deletion-detecting partial partition}. If for every $i$ the code ${\mathcal{A}}_i$ is a $t$-tail-deletion-correcting code then we say that the partial partition $\mathcal{R}$ is a \emph{$t$-tail-deletion-correcting partial partition}.

For every vector of partial permutations $\bfc\in{\partpermallgenseq}$, we denote by $\Lambda(\bfc)$ the vector of indicators of the partial partition $\mathcal{R}$, i.e., in index $i$ we have the value $j$ if $\bfc_i\in {\mathcal{A}}_j$.  
If the partial permutation is not in any set we denote it by ``$?$''. We omit the partial partition $\mathcal{R}$ from the notation of $\Lambda$ as it will clear from the context. 

\begin{example}\label{ex:lambda}

With $q=4$ and $n=3$, let ${{\mathcal{A}}_0=\{123,21\},{\mathcal{A}}_1=\{1,231,24,2\}}$ be a partial partition of $\partpermallgenseq$.
    It holds that $\Lambda((123,123,3))=(0,0,?),\Lambda((123,21,21))=(0,0,0),\Lambda((321,24,21))=(?,1,0)$.    
\end{example}

Now we are ready to present the construction that follows the ideas of tensor product codes~\cite{TENS}. 

\begin{construction}\label{cnstr:TPC}
Let $\mathcal{R}=\{{\mathcal{A}}_0,{\mathcal{A}}_1,\dots,{\mathcal{A}}_{\ell-1}\}$ be a partial partition where ${\mathcal{A}}_i\subseteq\partpermallgen$ for $0\leq i\leq \ell-1$. This partition 
 will be referred as the \emph{inner code}. Let $\cC$ be a length-$n$ code over $\{0,1,2,\dots,\ell -1\}$ 
% \ey{are you sure the length is $n$?}
% with minimum Hamming distance $d_{\textbf{out}}$ 
% \ey{is it the same $q$ from the previous section? it shouldn't be}
that will be the \emph{outer code}. The tail tensor permutation code
$\mathbf{TTPC}(\mathcal{R},\cC)$ is defined as follows. A vector of partial permutations $\bfc\in({\partpermallgen})^n$ is a codeword in $\mathbf{TTPC}(\mathcal{R},\cC)$ if and only if 
$
\Lambda(\bfc)\in\cC,
$
i.e., 
$\mathbf{TTPC}(\mathcal{R},\cC)=\{\bfc\in({\partpermallgen})^n |\Lambda(\bfc)\in\cC\}$.
\end{construction}
A code that is constructed by Construction~\ref{cnstr:TPC} will be called a \emph{tail tensor permutation code}.
The following theorems can be proved using similar ideas as in~\cite{RMC25}. 
First, the following theorem calculates the size of tail tensor permutation codes.
\begin{theorem}\label{th:ttpcsize}
    Let ${\mathcal{R}}=\{{\mathcal{A}}_0,{\mathcal{A}}_1,\dots,{\mathcal{A}}_{\ell-1}\}$, where $ {\mathcal{A}}_i\subseteq\partpermallgen$ for $0\leq i\leq \ell-1$, 
be a partial partition and let $A\in\mathbf{N}$ such that for every $i$ it holds that $|{\mathcal{A}}_i|\geq A$. Let $\cC$ be a length-$n$ code over $\{0,1,2,\dots,\ell -1\}$. It holds that 
$
|\mathbf{TTPC}(\mathcal{R},\cC)|\geq A^n|\cC|.
$
\end{theorem}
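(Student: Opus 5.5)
The plan is to count codewords by first fixing the outer codeword and then counting its preimages under $\Lambda$. For each outer codeword $\bfc'=(c'_1,\dots,c'_n)\in\cC$, I will consider the fiber
\[
\Lambda^{-1}(\bfc')=\{\bfc\in\partpermallgenseq : \Lambda(\bfc)=\bfc'\}.
\]
By the definition of $\mathbf{TTPC}(\mathcal{R},\cC)$, the code is the union of these fibers over all $\bfc'\in\cC$.

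The first step is to describe each fiber explicitly. Since $\mathcal{R}$ is a partial partition, the sets ${\mathcal{A}}_j$ are pairwise disjoint. Hence $\Lambda$ is well defined, and each coordinate $\bfc_i$ is sent to the unique index $j$ with $\bfc_i\in{\mathcal{A}}_j$, or to ``$?$''. The entries of $\bfc'\in\cC$ lie in $\{0,1,\dots,\ell-1\}$, so $\Lambda(\bfc)=\bfc'$ holds exactly when $\bfc_i\in{\mathcal{A}}_{c'_i}$ for every $i$. This gives
\[
\Lambda^{-1}(\bfc')={\mathcal{A}}_{c'_1}\times{\mathcal{A}}_{c'_2}\times\cdots\times{\mathcal{A}}_{c'_n},
\]
and therefore $|\Lambda^{-1}(\bfc')|=\prod_{i=1}^n|{\mathcal{A}}_{c'_i}|\geq A^n$ by the assumption $|{\mathcal{A}}_j|\ge A$.

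The second step is to note that fibers of distinct outer codewords are disjoint, because $\Lambda$ is a function. The code is therefore the disjoint union $\bigsqcup_{\bfc'\in\cC}\Lambda^{-1}(\bfc')$. Summing over $\cC$ gives $|\mathbf{TTPC}(\mathcal{R},\cC)|=\sum_{\bfc'\in\cC}\prod_i|{\mathcal{A}}_{c'_i}|\geq A^n|\cC|$.

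There is no real obstacle here. The only point that needs care is that disjointness of the ${\mathcal{A}}_j$ is exactly what makes $\Lambda$ single-valued, and hence what makes the fibers both product sets and pairwise disjoint. Without it, a vector could be counted under two outer codewords. The ``$?$'' symbol causes no trouble, because it never appears in a codeword of $\cC$, so vectors with an uncovered coordinate simply fall outside the code.
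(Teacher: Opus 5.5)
Your proof is correct. The paper gives no proof of its own here; it only says the result follows by ideas similar to \cite{RMC25}. Your decomposition of $\mathbf{TTPC}(\mathcal{R},\cC)$ into the disjoint union over $\bfc'\in\cC$ of the product sets $\mathcal{A}_{c'_1}\times\cdots\times\mathcal{A}_{c'_n}$ is the standard counting argument meant there, and it gives the exact size $\sum_{\bfc'\in\cC}\prod_{i}|\mathcal{A}_{c'_i}|$, which is what the paper tacitly relies on when it later states an equality for partitions whose parts all have the same size.
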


The following theorem gives us the properties of tail tensor permutation codes. We assume that 
$\mathcal{R}=\{{\mathcal{A}}_0,{\mathcal{A}}_1,\dots,{\mathcal{A}}_{\ell-1}\}$ where ${\mathcal{A}}_i\subseteq\partpermallgen$ for $0\leq i\leq \ell-1$ is a partial partition and $\cC$ is a length-$n$ code over $\{0,1,2,\dots,\ell -1\}$. 
\begin{theorem}\label{th:ttpcdet}
If $\mathcal{R}$ is a $t$-tail-deletion-detecting partial partition and $\cC$ has Hamming distance $e+1$ then $\mathbf{TTPC}(\mathcal{R},\cC)$ is a $(t,e)$-tail-deletion-detecting code.
If $\mathcal{R}$ is a $t$-tail-correcting-detecting partial partition and $\cC$ has Hamming distance $2e+1$ then $\mathbf{TTPC}(\mathcal{R},\cC)$ is a $(t,e)$-tail-deletion-correcting code.
\end{theorem}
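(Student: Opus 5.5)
The plan is to read off the partition indices of the received word. Under the error model, at most $e$ coordinates change, and each affected coordinate is a bounded tail-deletion image of the original symbol. The Hamming distance of $\cC$ then carries the argument, with the inner detecting/correcting property supplying the symbol-level step. Throughout, write $\bfc\in\mathbf{TTPC}(\mathcal{R},\cC)$ for the stored word and $\bfv$ for a word obtained from $\bfc$ by $(t,e)$-tail deletions. So $v_i\in\deloperator{t}(c_i)$ for every $i$, and $v_i\neq c_i$ for at most $e$ indices $i$.

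For detection, suppose $\bfv$ is itself a codeword different from $\bfc$; we need a contradiction. Since both words are codewords, $\Lambda(\bfc)$ and $\Lambda(\bfv)$ lie in $\cC$, so no entry is ``$?$''. Take an index $i$ with $v_i\neq c_i$. Then $v_i\in\deloperator{t}(c_i)$ with $v_i\neq c_i$. Because each $\mathcal{A}_j$ is $t$-tail-deletion-detecting, $v_i$ and $c_i$ cannot lie in the same part, so $\Lambda(\bfv)_i\neq\Lambda(\bfc)_i$. At indices where $v_i=c_i$ the labels coincide. The two label vectors therefore differ in at least one position, since $\bfv\neq\bfc$ forces some $v_i\neq c_i$, and in at most $e$ positions. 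This contradicts the minimum distance $e+1$ of $\cC$.

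For correction, suppose $\bfv$ arises both from $\bfc$ and from $\bfc'$ via $(t,e)$-tail deletions, where $\bfc\neq\bfc'$ are codewords. Let $E$ be the set of indices where $v_i\neq c_i$ and $E'$ the set where $v_i\neq c'_i$; each has size at most $e$. Outside $E\cup E'$ we have $c_i=v_i=c'_i$. Inside $E\cup E'$, at every index $i$ we have $\deloperator{t}(c_i)\cap\deloperator{t}(c'_i)\ni v_i$. If $c_i\neq c'_i$ at such an index, then $c_i$ and $c'_i$ must lie in different parts, because each part is $t$-tail-deletion-correcting. Hence the positions where $\bfc$ and $\bfc'$ differ all lie in $E\cup E'$, which has size at most $2e$, and at each of them $\Lambda(\bfc)_i\neq\Lambda(\bfc')_i$. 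There is at least one such position. So $0<d_H(\Lambda(\bfc),\Lambda(\bfc'))\leq 2e$, contradicting the distance $2e+1$.

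The only delicate point is that partial partitions allow symbols labelled ``$?$''. Codewords, however, never have ``$?$'' labels, and both arguments compare labels only of codewords, so this causes no trouble. We should also read ``$t$-tail-correcting-detecting'' in the statement as ``$t$-tail-deletion-correcting''.
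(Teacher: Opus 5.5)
Your proof is correct and complete, and your reading of ``$t$-tail-correcting-detecting'' as ``$t$-tail-deletion-correcting'' is the intended one. The paper does not actually prove this theorem; it only says it ``can be proved using similar ideas as in~\cite{RMC25}''. So your argument supplies a missing proof rather than reproducing one.

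The tensor-product viewpoint the paper alludes to is more naturally phrased as a decoder. A $t$-tail-deletion-correcting part is in particular $t$-tail-deletion-detecting. Hence every position with $v_i\neq c_i$ carries, in $\Lambda(\bfv)$, either a wrong label or ``$?$''. So $\Lambda(\bfv)$ is $\Lambda(\bfc)$ corrupted in at most $e$ positions. The outer code of distance $2e+1$ recovers $\Lambda(\bfc)$, which reveals exactly the error locations. At each such location, $c_i$ is the unique element of $\mathcal{A}_{\Lambda(\bfc)_i}$ whose ball $\deloperator{t}(\cdot)$ contains $v_i$.

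Your confusability argument reaches the same conclusion without committing to a decoder. In the correcting case it never inspects the labels of the received word, which sidesteps the ``$?$'' issue entirely. The decoder version, in exchange, shows that decoding reduces to outer-code decoding plus a per-symbol lookup. It is therefore efficient whenever the outer code has an efficient decoder.
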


% \begin{theorem}\label{th:ttpccor}
% If $\mathcal{R}$ is a $t$-tail deletions correcting partial partition and $\cC$ has Hamming distance $2e+1$ then $\mathbf{TTPC}(\mathcal{R},\cC)$ is a $(t,e)$-tail-deletion-correcting code.
% \end{theorem}

% \begin{theorem}
% Let ${\mathcal{R}=A_0,A_1,\dots,A_{\ell-1}\subseteq\partpermallgen}$ be a partial partition and let $\cC$ be a length-$n$ code over $\{0,1,2,\dots,\ell -1\}$.
% If $\mathcal{R}$ can correct $d$-tail deletions and $\cC$ has Hamming distance $e$ then $\mathbf{TTPC}(\mathcal{R},\cC)$ is a $(d,\frac{e-1}{2})$-tail deletions correcting code.
% \end{theorem}

In the following theorems, we present explicit constructions for tail tensor permutation codes using partial permutations. We denote by $A_q(n,d)$ the size of a largest code over $[q]$ of size $n$ with minimum Hamming distance $d$.

% or the following theorem we expand the definition of $\pi\uparrow_j$.

\begin{theorem}
For every $n,q,t,e\in\mathbf{N}$ such that $t<q$, it holds that there exists a code that can correct $(t,e)$-tail deletions. 
    Furthermore, it holds that,
    \[
    \codesizecordelseq(q,t,e)\geq {|\cC_{q,t}^{\mathsf{base}}|}^{n}A_{t!}(n,2e+1).
    \]
    Moreover, if $q\equiv 0\bmod (t+1)$ it holds that
    \[
    \codesizecordelseq(q,t,e)\geq \codesizecordel(q,t)^{n}A_{t!}(n,2e+1).
    \]
\end{theorem}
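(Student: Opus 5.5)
The plan is to instantiate Construction~\ref{cnstr:TPC} with a partial partition built from Lemma~\ref{lm:ballinup}. We set $\ell=t!$ and define
$\mathcal{A}_{j-1}=\left(\insoperatorsphere{t}(\cC_{q,t}^{\mathsf{base}})\right)_j$ for $1\leq j\leq t!$.

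By Lemma~\ref{lm:ballinup}, every $\mathcal{A}_{j-1}$ is a $t$-tail-deletion-correcting code, and the sets are mutually disjoint. Hence $\mathcal{R}=\{\mathcal{A}_0,\dots,\mathcal{A}_{t!-1}\}$ is a $t$-tail-deletion-correcting partial partition. The lemma also gives $|\mathcal{A}_{j-1}|=|\cC_{q,t}^{\mathsf{base}}|$ for every $j$.

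One point needs checking: $\insoperatorsphere{t}(x)$ must contain at least $t!$ elements for every $x\in\cC_{q,t}^{\mathsf{base}}$, so that the $j$-th element is defined. Since $|x|\leq q-t$, we have $|\insoperatorsphere{t}(x)|=\binom{q-|x|}{t}t!\geq t!$, so this holds. I will state this explicitly, because it is the only place where the hypothesis $t<q$ and the length bound of the base code enter.

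Next, let $\cC$ be a length-$n$ code over $\{0,\dots,t!-1\}$ with minimum Hamming distance $2e+1$ and size $A_{t!}(n,2e+1)$. By the second part of Theorem~\ref{th:ttpcdet}, $\mathbf{TTPC}(\mathcal{R},\cC)$ is a $(t,e)$-tail-deletion-correcting code. By Theorem~\ref{th:ttpcsize} with $A=|\cC_{q,t}^{\mathsf{base}}|$, its size is at least $|\cC_{q,t}^{\mathsf{base}}|^nA_{t!}(n,2e+1)$. This gives existence and the first bound. The degenerate case $t=0$ or $t!=1$ is also covered: then $\ell=1$, $A_1(n,\cdot)=1$, and the code is simply $\mathcal{A}_0^n$.

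For the second bound, we use the optimality theorem preceding this section. When $q\equiv 0 \bmod (t+1)$, that theorem gives $\codesizecordel(q,t)=|\cC_{q,t}^{\mathsf{base}}|$, and substituting into the first bound yields the claim.

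I do not expect a serious obstacle. The main care point is invoking Theorem~\ref{th:ttpcdet} correctly, since its proof is only referenced to~\cite{RMC25}. If needed, I would sketch it directly: a $(t,e)$-tail deletion changes at most $e$ symbols of $\bfc$. At each position $i$, the received symbol lies in $\deloperator{t}(c_i)$, and $c_i\in\mathcal{A}_{\Lambda(\bfc)_i}$. Because each $\mathcal{A}_j$ is correcting, the received symbol determines $c_i$ once its class index $j$ is known. To recover the indices, we decode the outer code from the positions whose class is ambiguous, treating them as at most $e$ errors relative to $\Lambda(\bfc)$. We then correct each position within its now-known class $\mathcal{A}_j$.
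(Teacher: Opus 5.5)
Your proposal is correct and follows the paper's proof essentially step for step. You use the same ingredients: the partial partition $\{(\mathcal{S}_{\textbf{ins}}^{t}(\mathcal{C}_{q,t}^{\mathsf{base}}))_j\}_{j=1}^{t!}$ from Lemma~\ref{lm:ballinup}, a largest outer code of minimum distance $2e+1$ over an alphabet of size $t!$, Theorems~\ref{th:ttpcdet} and~\ref{th:ttpcsize}, and the identity $\mathsf{DEL}_{\textbf{cor}}(q,t)=|\mathcal{C}_{q,t}^{\mathsf{base}}|$ for $q\equiv 0 \bmod (t+1)$. Your check that $|\mathcal{S}_{\textbf{ins}}^{t}(x)|=\binom{q-|x|}{t}t!\geq t!$ and your decoding sketch (the received class vector is within Hamming distance $e$ of $\Lambda(\bfc)$, after which each class is decoded internally) fill in details the paper leaves implicit; they are not a different route.
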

\begin{IEEEproof}
As defined before, denote,
% $
% \cC_{q,t}^{\mathsf{base}}=\bigcup_{1\leq i\leq q-d : \ i\equiv q-d\bmod (t+1)}\partperm{i}{q}.
% $
$
\cC_{q,t}^{\mathsf{base}}=\bigcup_{i=0}^{\left\lfloor\frac{q-t-1}{t+1}\right\rfloor}
\partperm{q-t-i(t+1)}{q}.
$
Using Lemma~\ref{lm:ballinup} for every $1\leq j\leq t!$ it holds that the code $\left(\insoperatorsphere{t}(\cC_{q,t}^{\mathsf{base}})\right)_j$ is a $t$-tail-deletion-correcting code. Furthermore, for every $j_1,j_2$ it holds that 
$
|\left(\insoperatorsphere{t}(\cC_{q,t}^{\mathsf{base}})\right)_{j_1}|=|\left(\insoperatorsphere{t}(\cC_{q,t}^{\mathsf{base}})\right)_{j_2}|
$ and the codes are disjoint.
Hence, the partial partition $\mathcal{R}=\left\{\left(\insoperatorsphere{t}(\cC_{q,t}^{\mathsf{base}})\right)_1,\left(\insoperatorsphere{t}(\cC_{q,t}^{\mathsf{base}})\right)_2,\dots,\left(\insoperatorsphere{t}(\cC_{q,t}^{\mathsf{base}})\right)_{t!}\right\}$ is a $t$-tail-deletion-correcting partial partition. Let $\cC$ be a largest length-$n$ code over $\{0,1,2,\dots,t! -1\}$ with Hamming distance $2e+1$. It holds from Theorem~\ref{th:ttpcdet} that the code $\mathbf{TTPC}\left(\mathcal{R},\cC\right)$ is a $(t,e)$-tail-deletion-correcting code. Furthermore, from Theorem~\ref{th:ttpcsize} it holds that $
|\mathbf{TTPC}\left(\mathcal{R},\cC\right)|=|\cC_{q,t}^{\mathsf{base}}|^n A_{t!}(n,2e+1).$ Hence,
\[
\codesizecordelseq(q,t,e) \geq |\cC_{q,t}^{\mathsf{base}}|^n A_{t!}(n,2e+1).\]
Moreover, if $q\equiv 0\bmod (t+1)$ it holds that $\codesizecordel(q,t)={|\cC_{q,t}^{\mathsf{base}}|}$, hence,
    \[
    \codesizecordelseq(q,t,e) \geq\codesizecordel(q,t)^n A_{t!}(n,2e+1)
    \]
\end{IEEEproof}

\begin{theorem}
For every $n,q,t,e\in\mathbf{N}$ such that $t<q$  it holds that there exists a code that can detect $(t,e)$-tail deletions. 
    Furthermore,
    \[
    \codesizedetdelseq(q,t,e)\geq {|\cC_{q,t}^{\mathsf{base}}|}^{n} A_{t+1}(n,e+1).
    \]
    Moreover, if $q\equiv 0\bmod (t+1)$ it holds that
    \[
    \codesizedetdelseq(q,t,e)\geq {\codesizecordel(q,t)}^{n} A_{t+1}(n,e+1).
    \]
    \end{theorem}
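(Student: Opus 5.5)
We mirror the proof of the preceding correcting theorem, replacing the outer code by one of distance $e+1$ and choosing $t+1$ inner classes rather than $t!$.

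\textbf{Step 1: the inner classes.} We need a $t$-tail-deletion-detecting partial partition with $t+1$ classes, each of size at least $|\cC_{q,t}^{\mathsf{base}}|$. A direct choice is a shifted family of detecting codes. For $0\le s\le t$ let
\[
\mathcal{A}_s=\bigcup_{i\ge 0,\; q-s-i(t+1)\ge 1}\partperm{q-s-i(t+1)}{q}.
\]
These sets are pairwise disjoint, since they are indexed by distinct residues of the length modulo $t+1$. Each $\mathcal{A}_s$ is $t$-tail-deletion-detecting by the same argument as in Theorem~\ref{th:bestdetperm}. Indeed, a $\pi\in\mathcal{A}_s$ of length $j$ satisfies $\deloperator{t}(\pi)\setminus\{\pi\}\subseteq\bigcup_{\ell=j-t}^{j-1}\partperm{\ell}{q}$, and none of these lengths lies in the residue class of $j$.

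\textbf{Step 2: class sizes.} The map $x\mapsto(\insoperatorsphere{t-s}(x))_1$ is injective from $\cC_{q,t}^{\mathsf{base}}$ into $\mathcal{A}_s$. It is well defined because each base codeword has length at most $q-t$, so $t-s$ symbols can be prepended. It preserves the residue, sending length $q-t-i(t+1)$ to $q-s-i(t+1)$. It is injective because the suffix $x$ is recovered by deleting $t-s$ symbols. Hence $|\mathcal{A}_s|\ge|\cC_{q,t}^{\mathsf{base}}|$ for every $s$, with $s=t$ being $\cC_{q,t}^{\mathsf{base}}$ itself.

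Alternatively, one can take $\mathcal{A}_s=(\insoperatorsphere{t}(\cC_{q,t}^{\mathsf{base}}))_{s+1}$ from Lemma~\ref{lm:ballinup}. Those classes are correcting, hence detecting, and have exactly the base size. That route, however, needs $t!\ge t+1$ classes to exist, which fails for small $t$. The residue construction avoids this, so we use it.

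\textbf{Step 3: outer code and conclusion.} Let $\cC$ be a largest length-$n$ code over $\{0,\dots,t\}$ with minimum Hamming distance $e+1$, so $|\cC|=A_{t+1}(n,e+1)$. By Theorem~\ref{th:ttpcdet}, $\mathbf{TTPC}(\mathcal{R},\cC)$ is a $(t,e)$-tail-deletion-detecting code. By Theorem~\ref{th:ttpcsize} with $A=|\cC_{q,t}^{\mathsf{base}}|$, its size is at least $|\cC_{q,t}^{\mathsf{base}}|^nA_{t+1}(n,e+1)$, which gives the first bound.

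For the second bound, when $q\equiv0\bmod(t+1)$ the optimality theorem for correcting codes gives $\codesizecordel(q,t)=|\cC_{q,t}^{\mathsf{base}}|$. Substituting yields the stated bound.

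\textbf{Main obstacle.} The delicate point is Step 1. One must confirm that disjoint detecting classes are exactly what Theorem~\ref{th:ttpcdet} requires. The intuition is that each corrupted coordinate either keeps its class index or, since the class is detecting, moves to another class or to ``$?$'', so it becomes an erasure-like error. One must also check that every class has size at least the base size, which is the residue-preserving injection of Step 2.
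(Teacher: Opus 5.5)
Your proposal is correct and follows the paper's proof essentially step for step. It uses the same $t+1$ residue classes of lengths modulo $t+1$ (the paper's $\cC_j$, $0\le j\le t$), detection by the argument of Theorem~\ref{th:bestdetperm}, a maximal outer code of distance $e+1$ over $\{0,\dots,t\}$, and Theorems~\ref{th:ttpcdet} and~\ref{th:ttpcsize}. The only difference is that you justify $|\mathcal{A}_s|\ge|\cC_{q,t}^{\mathsf{base}}|$ by an explicit length-shifting injection, whereas the paper simply asserts that the class sizes decrease in the index and that $\cC_t=\cC_{q,t}^{\mathsf{base}}$.
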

\begin{IEEEproof}
For any $0\leq j\leq t$, denote
    % \[
    % \cC_j=\cup_{1\leq i\leq q-j : \ i\equiv q-j\bmod (t+1)}\{\partperm{i}{q}\}.
    % \]
    $\cC_j=\bigcup_{i=0}^{\left\lfloor\frac{q-j-1}{t+1}\right\rfloor}\partperm{q-j-i(t+1)}{q}$.
    One can note that the codes are disjoint, furthermore, $\cC_t$ has the smallest size. Similarly to the proof of Theorem~\ref{th:bestdetperm} for any $0\leq j\leq t$ it holds that $\cC_j$ is a $t$-tail-deletion-detecting code. Let $\cC$ be a largest length-$n$ code over $\{0,1,2,\dots,t\}$ with Hamming distance $e+1$. It holds from Theorem~\ref{th:ttpcdet} that the code $\mathbf{TTPC}(\{\cC_0,\cC_1,\dots,\cC_{t}\},\cC)$ is a $(t,e)$-tail-deletion-detecting code.
    One can note that for $i<j$ it holds that $|\cC_i|\geq|\cC_j|$, and furthermore, $\cC_t= \cC_{q,t}^{\mathsf{base}}$, hence, for every $i$ it holds that 
    $
    |\cC_i|\geq |\cC_{q,t}^{\mathsf{base}}|.
    $
    Using Theorem~\ref{th:ttpcsize} it holds that $|\mathbf{TTPC}(\{\cC_0,\cC_1,\dots,\cC_t\},\cC)|\geq |\cC_t|^nA_{t+1}(n,e+1)={|\cC_{q,t}^{\mathsf{base}}|}^{n}A_{t+1}(n,e+1)$. 
    Hence,
    \[
    \codesizedetdelseq(q,t,e)\geq {|\cC_{q,t}^{\mathsf{base}}|}^{n} A_{t+1}(n,e+1).
    \]
    Moreover, if $q\equiv 0\bmod (t+1)$ it holds that $\codesizecordel(q,t)={|\cC_{q,t}^{\mathsf{base}}|}$, hence,
    \[
    \codesizedetdelseq(q,t,e)\geq {\codesizecordel(q,t)}^{n} A_{t+1}(n,e+1).
    \]
\end{IEEEproof}

% \subsection{PDF Requirements}

% Only electronic submissions in form of a PDF file will be
% accepted. The PDF file has to be PDF/A compliant. A common problem is
% missing fonts. Make sure that all fonts are embedded. (In some cases,
% printing a PDF to a PostScript file, and then creating a new PDF with
% Acrobat Distiller, may do the trick.) More information (including
% suitable Acrobat Distiller Settings) is available from the IEEE
% website \cite{IEEE:pdfsettings, IEEE:AuthorToolbox}.

% \section{Conclusion}

% We conclude by pointing out that on the last page the columns need to
% balanced. Instructions for that purpose are given in the source file.

% Moreover, example code for an appendix (or appendices) can also be
% found in the source file (they are commented out).

%%%%%%
%% Appendix:
%% If needed a single appendix is created by
%%
%\appendix
%%
%% If several appendices are needed, then the command
%%
% \appendices
%%
%% in combination with further \section commands can be used.
%%%%%%

\section*{Acknowledgments}
This work was funded by the European Union (DiDAX, 101115134). Views and opinions expressed are however those of the authors only and do not necessarily reflect those of the European Union or the European Research Council Executive Agency. Neither the European Union nor the granting authority can be held responsible for them.

 The collaboration and discussions of this paper were originated in Dagstuhl Seminar 24511~\cite{DSTU}.


\begin{thebibliography}{9}

\bibitem{anavy_DataStorageDNA_2019} 
 L. Anavy, I. Vaknin, O. Atar, R. Amit, and Z. Yakhini, 
``Data storage in DNA with fewer synthesis cycles using composite DNA letters," \emph{Nature Biotechnology}, vol. 37, no. 10, pp. 1229–1236, 2019.


\bibitem{BAMA}
A. Barg and A. Mazumdar, "Codes in permutations and error correction for rank modulation," in \emph{IEEE Trans. on Inf. Theory, vol. 56, no. 7}, pp. 3158-3165, July 2010.

\bibitem{augmented_encoding}
Y. Choi et al., 
%, T. Ryu,  A. Lee, H. Choi,  H. Lee,  J. Park,  S.H., Song,  S. Kim,  H. Kim,  W. Park,  and S. Kwon, 
``High information capacity DNA-based data storage with augmented encoding characters using degenerate bases,"
\emph{Scientific Reports}, vol.9, 2019. 

\bibitem{TC24}
T. Cohen and E. Yaakobi, "Optimizing the decoding probability and coverage ratio of composite DNA," \emph{IEEE Int. Symp. Inf. Theory (ISIT), Athens, Greece, 2024}, pp. 1949-1954. 

\bibitem{RMC25}
T. Cohen, Z. Wang, E. Yaakobi and Z. Yakhini, "Rank modulated composite encoding for data storage in DNA," \emph{2025 13th International Symposium on Topics in Coding (ISTC)}, Los Angeles, CA, USA, 2025, pp. 1-5.

\bibitem{RMEitanRyan}
R. Gabrys, E. Yaakobi, F. Farnoud and J. Bruck, "Codes correcting erasures and deletions for rank modulation," \emph{IEEE International Symposium on Information Theory}, Honolulu, HI, USA, 2014, pp. 2759-2763.

\bibitem{Getal13} 
N. Goldman et al., %, P. Bertone, S. Chen, C. Dessimoz, E. M. LeProust, B. Sipos, and E. Birney, 
``Towards practical, high-capacity, low-maintenance information storage in synthesized DNA," \emph{Nature}, vol. 494, no. 7435, pp. 77--80, 2013.

\bibitem{RAMOSCH}
A. Jiang, R. Mateescu, M. Schwartz and J. Bruck, "Rank modulation for flash memories," in \emph{IEEE Trans. on Inf. Theory, vol. 55, no. 6}, pp. 2659-2673, June 2009.

\bibitem{RobRank}
A. Jiang, R. Mateescu, M. Schwartz, and J. Bruck, “Rank modulation for
flash memories,” \emph{IEEE Trans. Inf. Theory, vol. 55, no. 6}, pp. 2659–2673,
Jun. 2009.

\bibitem{CORRSCH}
A. Jiang, M. Schwartz and J. Bruck, "Correcting charge-constrained errors in the rank-modulation scheme," in \emph{IEEE Trans. on Inf. Theory, vol. 56, no. 5}, pp. 2112-2120, May 2010.



\bibitem{KNUTH}
D. E. Knuth, \emph{The Art of Computer Programming Volume 3: Sorting and Searching}, 2nd ed. Reading, MA: Addison-Wesley, 1998.

\bibitem{KYW23}
A. Kobovich, E. Yaakobi, and N. Weinberger, 
``M-DAB: An input-distribution optimization algorithm for composite DNA storage by the multinomial channel," Arxiv, Sep., 2023.

\bibitem{osti_1619517}
H. Lee, R. Kalhor, N. Goela, J. Bolot, and G.M. Church,
``Terminator-free template-independent enzymatic DNA synthesis for digital information storage,"
\emph{Nature Communications}, vol. 10, no. 1, Jun., 2019.

\bibitem{Levenshtein1966}
V. Levenshtein,
"Binary codes capable of correcting deletions, insertions, and reversals,"
\emph{Soviet Physics Doklady}, vol.~10, no.~8, pp.~707--710, 1966.


% \bibitem{ISIT24_1}
% O. Sabary, I. Preuss, R. Gabrys, Z. Yakhini, L. Anavy, and E. Yaakobi, 
% ``Error-correcting codes for combinatorial DNA composite," submitted to \emph{IEEE International Symposium on Information Theory}, 2024.


\bibitem{PGYA24}
I. Preuss, B. Galili, Z. Yakhini, and L. Anavy, ``Sequencing coverage analysis for combinatorial DNA-based storage systems," BioRxiv Jan., 2024, https://www.biorxiv.org/content/10.1101/
2024.01.10.574966v1.

\bibitem{OM1}
O. Sabary, I. Preuss, R. Gabrys, Z. Yakhini, L. Anavy and E. Yaakobi, "Error-correcting codes for combinatorial composite DNA,"  \emph{IEEE Int. Symp. Inf. Theory (ISIT), Athens, Greece, 2024, pp}. 109-114.

\bibitem{ROM}
R. Sokolovskii, P. Agarwal, L. A. Croquevielle, Z. Zhou and T. Heinis, "Coding over coupon collector channels for combinatorial motif-based DNA storage," in \emph{IEEE Trans. on Comm.}, Early Access.


\bibitem{DSTU}
R. B., Olgica Milenkovic, Zohar Yakhini, Yonatan Yehezkeally, Anisha Banerjee, and Frederik Walter. Coding Theory and Algorithms for Emerging Technologies in Synthetic Biology (Dagstuhl Seminar 24511). In Dagstuhl Reports, Volume 14, Issue 12, pp. 46-62, Schloss Dagstuhl – Leibniz-Zentrum für Informatik (2025) https://doi.org/10.4230/DagRep.14.12.46

% \bibitem{PRYA24}
% I. Preuss, M. Rosenberg, Z. Yakhini, and L. Anavy, ``Efficient DNA-based data storage using shortmer combinatorial encoding," (Jan. 2,
% 2024), [Online]. Available: https://www.biorxiv.org/content/10.1101/
% 2021.08.01.454622v2 (visited on 01/29/2024), preprint.

% \bibitem{ISIT24_2}
% F. Walter, O. Sabary, A. Wachter-Zeh, and E. Yaakobi, 
% ``Coding for composite DNA to correct substitutions, strand losses, and deletions," submitted to \emph{IEEE International Symposium on Information Theory}, 2024.

\bibitem{OM2}
F. Walter, O. Sabary, A. Wachter-Zeh and E. Yaakobi, "Coding for composite DNA to correct substitutions, Strand Losses, and Deletions," \emph{IEEE Int. Symp. Inf. Theory (ISIT), Athens, Greece, 2024, pp}. 97-102.


\bibitem{WESEL}
R. D. Wesel, E. E. Wesel, L. Vandenberghe, C. Komninakis and M. Medard, "Efficient Binomial Channel Capacity Computation with an Application to Molecular Communication," 2018 Information Theory and Applications Workshop (ITA), San Diego, CA, USA, 2018, pp. 1-5, doi: 10.1109/ITA.2018.8503225.
\bibitem{TENS}
J. Wolf, "On codes derivable from the tensor product of check matrices," in \emph{IEEE Trans. on Inf. Theory, vol. 11, no. 2, pp}. 281-284, April 1965.

\bibitem{ZC22}
W. Zhang, Z. Chen, and Z. Wang, 
``Limited-magnitude error correction for probability vectors in DNA storage,"
\emph{ IEEE International Conference on Communications (ICC)}, pp. 3460--3465, 2022. 

\end{thebibliography}
\end{document}